\setlist{leftmargin=10mm}
\newtheorem{remark-star}{Remark}
\newtheorem{remark-star-1}{Remark}
\newtheorem{thm}{Theorem}
\newtheorem{definition}{Definition}
\newtheorem{lemma}{Lemma}
\newtheorem{remark}{Remark}
\newtheorem{proposition}{Proposition}
\newcommand{\definecalletter}[1]{%
  \expandafter\newcommand\csname c#1\endcsname{\mathcal{#1}}%
}
\forcsvlist{\definecalletter}{A,B,C,D,E,F,G,H,I,J,K,L,M,N,O,P,Q,R,S,T,U,V,W,X,Y,Z}
\newcommand{\E}{\mathbb{E}}
\newcommand{\R}{\mathbb{R}}
\newcommand{\softmax}{\mathrm{softmax}}
\newcommand{\Ex}{\E}
\newcommand{\adv}{\mathbf{adv}}
\newcommand{\supp}{\mathrm{supp}}
\newcommand{\ball}{\mathrm{Ball}}
\NewDocumentCommand{\fDPi}{O{$(\epsilon,\delta)$}}{%
  #1-$\mathrm{DP}^\Psi_i$~%
}
\NewDocumentCommand{\fDPr}{O{$(\epsilon,\delta)$}}{%
  #1-$\mathrm{DP}^\Psi_r$~%
}
\NewDocumentCommand{\DPr}{O{$(\epsilon,\delta)$}}{%
  #1-$\mathrm{DP}_r$~%
}
\NewDocumentCommand{\DPi}{O{$(\epsilon,\delta)$}}{%
  #1-$\mathrm{DP}_i$~%
}
\newcommand{\Sim}{\mathrm{sim}}
\newcommand{\poisson}{\mathrm{poisson}}
\newcommand{\set}[1]{\{#1\}}
\titleformat{\paragraph}[runin]
{\bfseries\scshape}{\theparagraph}{1em}{}
\newcommand{\myparagraph}[1]{\par\vspace{1em}\textbf{#1:}}
\def\BibTeX{{\rm B\kern-.05em{\sc i\kern-.025em b}\kern-.08em
    T\kern-.1667em\lower.7ex\hbox{E}\kern-.125emX}}
\begin{document}

\title{Machine Learning with Privacy for Protected Attributes
%
}

\author{\IEEEauthorblockN{Saeed Mahloujifar}
\IEEEauthorblockA{\textit{FAIR at Meta} \\
saeedm@meta.com}
\and
\IEEEauthorblockN{Chuan Guo}
\IEEEauthorblockA{FAIR at Meta \\
chuanguo@meta.com}
\and
\IEEEauthorblockN{G. Edward Suh}
\IEEEauthorblockA{\textit{NVIDIA / Cornell University} \\
esuh@nvidia.com}
\and
\IEEEauthorblockN{Kamalika Chaudhuri}
\IEEEauthorblockA{\textit{FAIR at Meta} \\
kamalika@meta.com}
}

\maketitle

\begin{abstract} Differential privacy (DP) has become the standard for private data analysis. Certain machine learning applications only require privacy protection for specific protected attributes. Using naive variants of differential privacy in such use cases can result in unnecessary degradation of utility. In this work, we refine the definition of DP to create a more general and flexible framework that we call feature differential privacy (FDP). Our definition is simulation-based and allows for both addition/removal and replacement variants of privacy, and can handle arbitrary and adaptive separation of protected and non-protected features.

We prove the properties of FDP, such as adaptive composition, and demonstrate its implications for limiting attribute inference attacks. We also propose a modification of the standard DP-SGD algorithm that satisfies FDP while leveraging desirable properties such as amplification via sub-sampling. We apply our framework to various machine learning tasks and show that it can significantly improve the utility of DP-trained models when public features are available. For example, we train diffusion models on the AFHQ dataset of animal faces and observe a drastic improvement in FID compared to DP, from 286.7 to 101.9 at $\epsilon=8$, assuming that the blurred version of a training image is available as a public feature.

Overall, our work provides a new approach to private data analysis that can help reduce the utility cost of DP while still providing strong privacy guarantees. \end{abstract}

\section{Introduction} Differential privacy (DP) has emerged as the gold standard for privacy protection. The typical formulation of DP considers adjacent datasets that differ by one sample and requires the output of a DP algorithm to be distributionally indistinguishable when executed on adjacent datasets. However, there are applications, particularly in the context of machine learning, where the unit of privacy is not necessarily the entire sample.

Consider a simple example: training a generative model that outputs images of cars while ensuring that no real license plate in the training set is memorized and reproduced. One plausible solution to achieve this is to crop (or blur) all license plates from the training images; however, this will lead to the model generating unrealistic cars (low utility). Another potential solution is to train a generative model with a DP guarantee (e.g., using DP-SGD \citep{song2013stochastic, abadi2016deep}); this can ensure that the probability of reproducing a license plate is small via bounds on data reconstruction~\citep{guo2022bounding, hayes2024bounding}. However, training the model with DP could come with a major utility cost because the privacy definition is insensitive to which part of the image we try to protect.

This separation of sensitive and non-sensitive attributes occurs in different machine learning applications and across different modalities: For images, only certain parts, such as license plates, credit card numbers, or the face of a person, may be considered private; for text data, the most privacy-sensitive part could be personally identifiable information (PII) such as names, addresses, phone numbers, \emph{etc.}; for tabular datasets, we might only consider a few columns to be private while the rest could be public. When such separation of private information is possible, it may be desirable to refine the definition of DP beyond the granularity of a single sample. Doing so allows the model to memorize non-sensitive aspects of a training sample while providing rigorous protection for sensitive information.\ 
\begin{figure}[t]
    \centering
\includegraphics[width=0.9\linewidth]{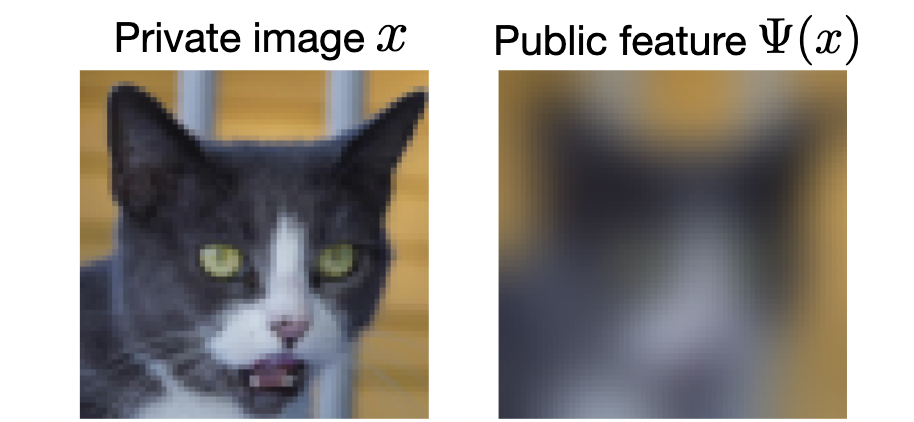}
\caption{Private image and its public variant. \label{fig:public_vs_private}}
\includegraphics[width=0.9\linewidth]{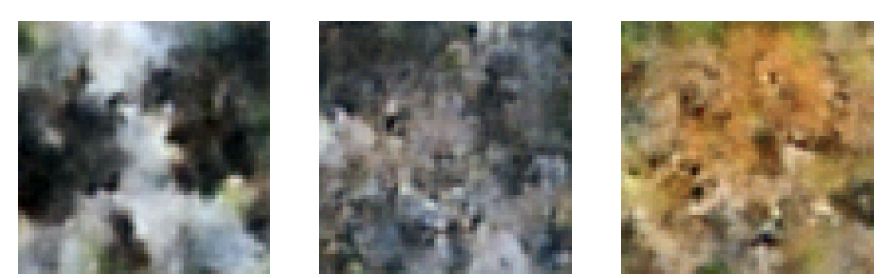}
\caption{Generation results for DP training. \label{fig:afhq_dp_samples}}
\includegraphics[width=0.9\linewidth]
{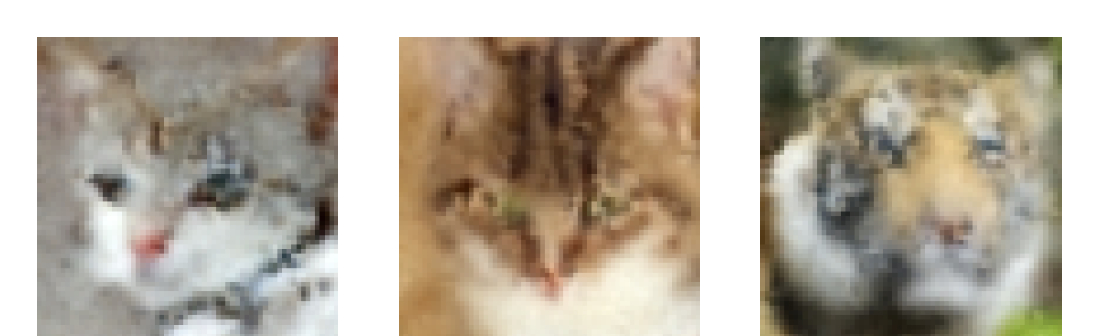}
    \caption{Generation results for DP training while leaking the public feature.  \label{fig:afhq_fdp_samples}}
\end{figure}

To formulate the separation of private and public features, our definition leverages a public feature map $\Psi$. Intuitively, for a given sample $x$, $\Psi(x)$ should encode any information about $x$ that is publicly accessible, while all other information in $x$ is considered private. For example, for a private image $x$ in \autoref{fig:public_vs_private} (left), the public feature map $\Psi$ can be the blurring operation, with the blurred image $\Psi(x)$ in \autoref{fig:public_vs_private} (right) assumed to be publicly accessible.

Allowing public features to be a function of the input makes our definition of feature DP highly flexible, enabling it to be used in almost any scenario with partial privacy protection. This definition also enjoys desirable properties of DP, such as composition and post-processing. We emphasize that we are not the first to identify the importance of partial privacy protection. Previous work has explored partial protection of privacy in various contexts. However, our definition has desirable properties that make it more general and suitable for many scenarios. In the Related Work section, we compare our approach with previous definitions that aim to protect privacy for certain features.

Furthermore, we propose a variant of the quintessential DP learning algorithm—DP-SGD~\citep{abadi2016deep}—that adapts to feature-DP and can benefit from amplification by sub-sampling. We empirically demonstrate the advantage of feature-DP on two popular machine learning tasks—classification and diffusion model training—and show that it improves the privacy-utility trade-off compared to DP.

\myparagraph{Our contributions} \begin{enumerate}[leftmargin=*] \item We propose and define \emph{feature differential privacy}. Our definition is highly flexible, allowing one to express different forms of private information for various data modalities, and satisfies desirable properties such as composition and post-processing. We further analyze the privacy implications of feature-DP by connecting it to attribute inference attacks.
\item We propose a variant of DP-SGD that specializes in feature-DP. Notably, it is possible to show that a straightforward adaptation does not benefit from privacy amplification by sub-sampling. Our variant solves this issue by sampling two separate batches of data in each iteration.

\item We analyze our variant of DP-SGD for convex optimization and show convergence results. We obtain concrete bounds for logistic regression that outperform the utility of standard DP-SGD while protecting the private features with the same $(\epsilon,\delta)$-parameters. Notably, we are also able to answer an open question raised in \citep{ghazi2021deep} regarding label differential privacy (which is a special case of feature DP). Specifically, we achieve excess risk bounds that improve as the size of the dataset grows. This is precisely a consequence of our framework and the fact that it can leverage sub-sampling amplification.

\item We perform experiments on various ML datasets to showcase the benefit of feature-DP in improving the privacy-utility trade-off. These experiments cover classification and image generation tasks on tabular and vision datasets. Notably, we train private diffusion models on the AFHQ dataset of animal faces. \autoref{fig:afhq_dp_samples} shows samples generated by a diffusion model trained with a DP guarantee of $\epsilon=8$, attaining an FID of $286.7$. In comparison, \autoref{fig:afhq_fdp_samples} shows samples generated by a diffusion model trained with a feature-DP guarantee of $\epsilon=8$, which shows substantial improvement in visual quality and attains a much lower FID of $101.9$.
\end{enumerate}

\section{Related Work}
In this section, we discuss several related works that bear similarity to ours. Several studies have explored notions of privacy that allow privacy to be defined in a partial manner. Below, we summarize some of these notions and highlight the novelty of our approach relative to them.

\begin{itemize} \item \textbf{Generalized notions of differential privacy:} Previous works, such as Metric DP~\citep{chatzikokolakis2013broadening} and Pufferfish privacy~\citep{kifer2012private}, have considered generalizations of differential privacy that can be used for modeling various privacy settings. These notions are almost too general for our use case, and interpreting them as a way to protect the privacy of certain features is somewhat challenging. Moreover, in contrast to our definition, these definitions can only be instantiated for add/remove notions of privacy and are only defined for the traditional notion of approximate DP, which is parameterized by $(\epsilon, \delta)$ (unlike our definition, which is based on trade-off functions and resembles $f$-Differential privacy).

\item \textbf{Partial DP:} Perhaps a more related notion is Partial Differential Privacy, introduced in~\citep{ghazi2022algorithms}. This notion attempts to define privacy in a way that provides a granular notion of privacy based on how many attributes are protected. In particular, this notion considers settings where there are $d$ well-defined attributes. Then, the privacy protection will gradually increase as the number of sensitive attributes decreases. This notion could be relevant in settings where the sensitive attributes are not well-defined. However, in our setting, we are interested in scenarios where there is a clear set of sensitive attributes, and we want to protect their privacy. That is, there are certain attributes that we can completely leak and still remain private. It is worth noting that the notion of Partial DP is only applicable to datasets that are neighboring with respect to replacement and also does not have a well-defined $f$-DP variant.

\item \textbf{Selective DP:} Selective differential privacy~\citep{shi2021selective} considers privacy-preserving language modeling on data containing sensitive information such as personally identifiable information (PII). They define selective DP in a similar manner by using automated tools to detect sensitive tokens and protect only the sensitive tokens with a DP guarantee. Their DP guarantee ensures that replacing the PIIs with another set of PIIs will not significantly change the final model. Although this notion is very similar to ours in purpose, it is only defined for language models. Moreover, their privacy notion is restricted to the replacement notion of adjacency and does not benefit from amplification by sub-sampling.

\item \textbf{Tabular Data:} The work of~\citep{krichene2023private} considered a notion of private learning with public features for tabular data, where the data is assumed to contain categorical features with additional descriptors that are public. For example, in movie recommendation, the training samples can be all movies that a user is interested in, and the public descriptors for movies can contain information such as the director, actors, genre, \emph{etc.} This privacy notion is stronger than ours and is closer in nature to standard DP.

\item \textbf{Information Theoretic Privacy:} There are other information-theoretic notions of privacy that allow for the separation of sensitive and public attributes (see~\citep{mireshghallah2021not} and references therein). However, in this work, we adhere to differential privacy, as it has become the gold-standard notion of privacy, especially for machine learning applications.

\item \textbf{Label Differential Privacy:} Finally, we point out that label-differential privacy~\citep{ghazi2021deep, malek2021antipodes}, a privacy notion defined for classification with private labels, is a special case of our definition. Our notion of privacy enables one to define label-differential privacy by setting the label to be the only private attribute. 
\item \textbf{Distribution privacy and attribute privacy} We also note some bodies of work that try to protect the global properties of datasets.  Earlier work~\citep{ganju2018property, mahloujifar2022property} had shown that the distribution of certain properties of a dataset can be inferred by adversaries (e.g. the gender distribution in census data). Zhang et. al.~\citep{zhang2022attribute} defines attribute privacy as a way to formalize and protect privacy against these attacks. On the other hand, the work of Lin et. al.~\citep{lin2022distributional} considers distributional privacy, where the concern is about the role of data release in leaking the underlying distribution of the data. Both attribute privacy and distributional privacy are about the leakage from the underlying distribution and are not concerned with individual samples, unlike our definition.
\item \textbf{Inferential privacy} Ghosh et. al.~\citep{ghosh2016inferential} have defined the notion of inferential privacy. This notion is similar to differential privacy but, instead of assuming data points are independent, it allows for arbitrary correlation between the samples in the dataset. What makes this definition relevant to us is the way it can still be defined even when there are correlations between different data points. An interesting observation made by one of our anonymous reviewers was that one can leverage this definition in the local setting and define privacy for a specific sensitive feature. Since inferential privacy allows for correlation, this will be useful for feature privacy when different features could be correlated.
However, there are a few shortcomings with this definition. First, this definition would require an explicit set of attributes and then it would satisfy privacy for all of them, but in a granular way. So, it becomes similar to the work of \citep{ghazi2022algorithms} discussed above. Second, we are not aware of any optimization algorithm that satisfies this definition and leverages non-private attributes. Also, we should point out that this definition would only be useful for differing datasets with replacement, unlike our definition.
\end{itemize}
\section{Preliminaries}
In this section, we detail some of the preliminaries needed for differential privacy and its properties. We first begin by introducing some of our notations.
\myparagraph{Notations} We use calligraphic (e.g. $\mathcal{X}$) letters to denote sets.We use capital letters to denote random variables that are often supported on the calligraphic version of the letter (e.g. a random variable $X$ supported on $\mathcal{X}$). We use lowercase letter to denote members of a set (e.g. $x\in \mathcal{X})$. We use $x\sim X$ to denote the process of sampling a point $x$ from the random variable $X$. We use $\mathcal{X}^k$ to denote the space of all datasets of size $k$ with samples in $\mathcal{X}$. We use $\mathcal{X}^*$ to denote the union of all $\mathcal{X}^k$ for all integers $k\geq 0$.
\begin{definition} [Neighboring datasets]
    We call a pair of datasets $(\cS, \cS')$ $k$-neighboring with replacement, and denote it by $\cS\approx_k^r \cS$ iff $|\cS|=|\cS'|$ and $|\cS \Delta \cS'|=2k.$ (Here, $\Delta$ is the symmetric difference operation.) Similarly, we call them $k$-neighboring with insertion/deletion and denote it by $\cS\approx_k^{id} \cS'$ iff $|\cS\Delta \cS'| = k$.  When $k=1$, we simply say neighboring datasets instead of $1$-neighboring datasets and denote it with $\approx^r$ or $\approx^{id}$. 
\end{definition}
\begin{definition}[Differential Privacy \citep{Dwork2006}]
A randomized mechanism $M\colon \mathcal{X}^* \to \cO$ is $(\epsilon,\delta)$-DP with replacement (with insertion/deletion) if for all pairs of neighboring datasets $\cS\approx^r\cS'$ ($\cS\approx^{id}\cS')$ and all subsets of outcomes $\cT$ we have 
$$\Pr[M(\cS) \in \cT]\leq e^\epsilon \Pr[M(\cS')\in \cT] + \delta.$$
\end{definition}
Note that we have defined both forms of differential privacy (DP) — with replacement and with insertion/deletion — simultaneously. The privacy guarantee under replacement is strictly stronger, as it implies insertion/deletion privacy with the same parameters. Nevertheless, when reporting differential privacy parameters, it is common practice to use the insertion/deletion formulation.

Although this notion of DP is widely adapted , there is a finer version of DP named $f$-DP that we define next. This notion is very useful for ``privacy accounting'' purposes. This notion is based on trade-off functions defined below.
\begin{definition}[Trade-off function ]
    For random variables $X$ and $Y$ supported on a set $\cX$, the trade-off function between $X$ and $Y$ is function $f=T(X,Y)$ where $f$ is defined as
    $$f(\alpha) =  1- \sup_\cT\set{\Pr[Y\in \cT]: \Pr[X\in \cT]\leq \alpha}.\footnote{Note that $X, Y$ could be of an arbitrary dimension. For instance they can be multidimensional Gaussians. However, the range and domain of the trade-off function is always on $[0,1]$.}$$
\end{definition}
\begin{definition}[$f$-Differential Privacy~\citep{dong2019gaussian}] Let $f \colon [0,1] \to [0,1]$ be an arbitrary trade-off function.
A randomized mechanism $M\colon \mathcal{X}^* \to \cO$ is $f$-DP  with replacement, denoted by $f$-DP$_r$, (with insertion/deletion, denoted $f$-DP$_i$) if for all pairs of neighboring datasets $\cS\approx^r\cS'$ ($\cS\approx^{id}\cS')$ and all subsets of outcomes $\cT$ we have 
$$\Pr[M(\cS) \in \cT]\leq 1- f\Big(\Pr[M(\cS')\in \cT]\Big).$$
\end{definition}
Note that the traditional notion of DP is essentially a linear lower bound on the optimal $f$-DP curve for a mechanism. Although expressing the curve with only two parameters is nice in that it can be used as a benchmark for privacy, it is still better to work with the notion of $f$-DP for composition and privacy accounting. That is, when we want to compose two mechanisms $M_1$ and $M_2$, it is better to first obtain the tightest $f$-DP for each mechanism, compose the $f$-DP of the two mechanisms, and then convert the resulting $f$-DP curve to $(\epsilon,\delta)$-DP. 
Next, we discuss the details of composition amd conversion of $f$-DP to DP.s
\begin{thm}[Composition for $f$-DP~\citep{dong2019gaussian}]
Consider a series of  mechanisms $M_1,\dots, M_k$ where each $M_i$ is $f_i$-DP, with $f_i=T(X_i,Y_i)$. Then the adaptive composition  $$M=M_1 o M_2, o \dots o M_k$$ is $f$-DP, where 
$$f=T(X_1\times\dots\times X_k, Y_1\times \dots \times Y_k).$$ and $X\times X'$ is the product distribution of $X$ and $X'$.
\end{thm}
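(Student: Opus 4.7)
The plan is to proceed by induction on $k$. The base case $k=1$ is immediate from the definition of $f_1$-DP. For the inductive step it suffices to establish the two-fold adaptive composition, since grouping $M_1, \ldots, M_{k-1}$ into a single composite mechanism and then composing with $M_k$ recovers the general case (the trade-off function $T(X_1\times\cdots\times X_k, Y_1\times\cdots\times Y_k)$ factors associatively through products of canonical pairs).

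The key technical tool I would use is a Blackwell-style reformulation of $f$-DP: a mechanism $M$ is $f$-DP with $f = T(X,Y)$ if and only if, for every pair of neighboring datasets $\mathcal{S} \approx \mathcal{S}'$, there exists a randomized post-processing $\Phi$ (a Markov kernel) such that the law of $\Phi(X)$ equals the law of $M(\mathcal{S})$ and the law of $\Phi(Y)$ equals the law of $M(\mathcal{S}')$. This equivalence follows from the data-processing inequality for trade-off functions together with the optimality of $T(X,Y)$ as the best trade-off realized by the pair $(X,Y)$; the ``only if'' direction is proved by a Neyman--Pearson-style explicit construction from the likelihood ratio, and the ``if'' direction is the post-processing inequality itself.

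Armed with this reformulation, the two-fold adaptive composition proceeds as follows. Fix $\mathcal{S} \approx \mathcal{S}'$. By the hypothesis on $M_1$ there is a kernel $\Phi_1$ such that $\Phi_1(X_1)$ has the law of $M_1(\mathcal{S})$ and $\Phi_1(Y_1)$ has the law of $M_1(\mathcal{S}')$. For each possible first-round outcome $o_1$, the conditional mechanism $M_2(\,\cdot\,;\, o_1)$ is $f_2$-DP, so there exists a kernel $\Phi_2^{o_1}$ with $\Phi_2^{o_1}(X_2)$ distributed as $M_2(\mathcal{S}; o_1)$ and $\Phi_2^{o_1}(Y_2)$ distributed as $M_2(\mathcal{S}'; o_1)$. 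The composed map $(x_1, x_2) \mapsto \bigl(\Phi_1(x_1),\, \Phi_2^{\Phi_1(x_1)}(x_2)\bigr)$ is then a randomized post-processing of the independent pair $(X_1, X_2)$ whose output matches in distribution the full adaptive composition $M(\mathcal{S})$, and symmetrically matches $M(\mathcal{S}')$ when applied to $(Y_1, Y_2)$. A final application of data processing yields $T(M(\mathcal{S}), M(\mathcal{S}')) \geq T(X_1\times X_2,\, Y_1\times Y_2)$, which is precisely the required $f$-DP statement for the composition.

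The main obstacle I anticipate is the measurable selection of the family $\{\Phi_2^{o_1}\}_{o_1}$ and verifying that the composed kernel is jointly measurable, so that the marginal laws on $(X_1, X_2)$ and on $(Y_1, Y_2)$ come out correctly after integrating over $o_1$. Concretely, one needs to treat $\Phi_2^{o_1}$ as a regular conditional kernel so that Fubini-type arguments can legally interchange integration over the first-round outcome with hypothesis-testing over the second-round output. This is routine on standard Borel output spaces (the usual setting for DP), but it is the one place where the informal reduction sketch requires genuine care; once this is handled, the induction on $k$ closes the proof.
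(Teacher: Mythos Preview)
The paper does not prove this theorem; it is stated in the preliminaries with a citation to Dong, Roth, and Su (2019) and used as a black box. Your proposal is correct and is essentially the argument given in that original source: the equivalence between $f$-DP with $f=T(X,Y)$ and Blackwell dominance of $(X,Y)$ over $(M(\mathcal{S}),M(\mathcal{S}'))$ is their Theorem~2.10, and the two-fold composition via composing the post-processing kernels is their Theorem~3.2. Your caution about measurable selection of $\{\Phi_2^{o_1}\}_{o_1}$ is the right place to be careful, and is exactly the technical point the original handles by restricting to Polish output spaces.

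For what it is worth, the present paper does prove a closely related statement---composition for \emph{feature} DP (the theorem labeled ``Composition for feature DP'')---and there it takes a different route: it works with hockey-stick divergences and dominating pairs rather than the Blackwell post-processing characterization. That approach avoids constructing explicit kernels and sidesteps the measurable-selection issue entirely; instead it manipulates the one-parameter family $HS_\alpha$ and uses a change-of-measure Fubini computation to peel off one factor at a time. Either method would establish the cited theorem; your Blackwell route is more conceptual, while the hockey-stick route is more computational but avoids the kernel-measurability detour.
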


We also state a theorem on how to convert $f$-DP to $(\epsilon,\delta)$-DP.
\begin{thm}[$f$-DP to DP conversion~\citep{dong2019gaussian}]
If a mechanism is $f$-DP, then for any $\delta\in[0,1]$ it $(\epsilon,\delta)$-DP with $$\epsilon=\sup_{\alpha\in[0,1]} log(\frac{1-\delta -f(\alpha)}{\alpha}).$$
\end{thm}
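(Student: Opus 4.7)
The plan is to unpack both definitions and observe that the formula for $\epsilon$ is designed precisely so that the $f$-DP inequality converts pointwise into the $(\epsilon,\delta)$-DP inequality, one event at a time.

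First I would fix an arbitrary pair of neighboring datasets $\cS \approx \cS'$ and an arbitrary measurable event $\cT \subseteq \cO$, and set $\alpha := \Pr[M(\cS')\in \cT]$. By the assumed $f$-DP guarantee we immediately obtain $\Pr[M(\cS)\in \cT] \leq 1 - f(\alpha)$. Next, I would exploit the defining supremum for $\epsilon$: since $\epsilon \geq \log\bigl((1-\delta-f(\alpha))/\alpha\bigr)$ holds for this particular $\alpha$, exponentiating and clearing the denominator gives $1 - f(\alpha) \leq e^\epsilon \alpha + \delta$. Chaining the two inequalities produces $\Pr[M(\cS)\in \cT] \leq e^\epsilon \Pr[M(\cS')\in \cT] + \delta$, which is exactly the $(\epsilon,\delta)$-DP condition. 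The symmetric inequality (with $\cS$ and $\cS'$ swapped) follows because the neighboring relation is symmetric, so the same $f$-DP bound applies with the roles exchanged and the same $\epsilon,\delta$ work verbatim.

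The main obstacle I expect is handling the boundary cases carefully, namely $\alpha = 0$ (where the ratio is $0/0$) and values of $\alpha$ for which $1-\delta-f(\alpha) \leq 0$ (where the logarithm is undefined or $-\infty$). For $\alpha = 0$, if $f(0) \geq 1-\delta$ then $\Pr[M(\cS)\in \cT] \leq 1-f(0) \leq \delta$ follows directly from $f$-DP; otherwise the supremum defining $\epsilon$ is $+\infty$ and the conclusion is vacuous. For the second case, those values of $\alpha$ can be excluded from the supremum without loss, since then $1 - f(\alpha) \leq \delta \leq e^\epsilon \alpha + \delta$ already. Once these edge cases are addressed, the core argument reduces to the single algebraic chain above, and the theorem follows by taking this bound over all neighboring pairs and all measurable events.
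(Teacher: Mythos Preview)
The paper does not actually supply its own proof of this conversion theorem; it is stated as a background result imported from Dong et al.\ (2019), so there is nothing in the paper to compare your argument against. On its own merits your argument is correct: fixing an event, letting $\alpha$ be the probability on the $\cS'$ side, applying the $f$-DP bound, and then using that $\epsilon$ dominates $\log\bigl((1-\delta-f(\alpha))/\alpha\bigr)$ for every $\alpha$ is exactly the intended direct verification, and your treatment of the boundary cases $\alpha=0$ and $1-\delta-f(\alpha)\leq 0$ is sound.
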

We note that calculating the optimal $f$-DP for (sub-sampled) Gaussian mechanism and its composition has been the focus of recent body of work \citep{dong2019gaussian, gopi2021numerical,wang2022analytical,zhu2022optimal,wang2023randomized}. 
In this work, we will also use these results to calculate tight privacy bounds. 

\section{Feature Differential Privacy}\label{sec:feature_DP}
As inferred from its definition, DP aims to regulate an adversary's capability to distinguish whether a particular data-point was incorporated in the dataset or not~\citep{Dwork2006}. The objective of our work is to relax this definition and control the adversary's information access up to a certain level of permissible leakage. Specifically, rather than permitting the neighboring datasets $\cS$ and $\cS'$ to be chosen in the worst-case scenario, we necessitate the differing points to be identical for some of their features. This requirement imposes an increased challenge for the adversary to differentiate between $\cS$ and $\cS'$, but simultaneously obligates us to willingly expose those features. 
Definition \ref{def:fdp} below formalizes this notion.

\begin{definition} [Feature Differential Privacy]\label{def:fdp}
Let $\Psi: \cX \to \cF$ be a feature that maps a point $x\in \cX$ to a point $u\in \cF$ in the feature space, and let $f:[0,1]\to [0,1]$ a trade-off function. We say a mechanism $M$ is $f$ feature differential private with replacement, relative to $\Psi$, and denote it by \fDPr[$f$], if for all neighboring datasets $\cS\approx^{r}\cS’$  such that
\begin{align*}\cS\Delta \cS’=\set{x, x'}  \text{~and~}   \Psi(x)=\Psi(x’)
\end{align*}
and for all subsets $\cT$ of the range of the mechanism we have
$$\Pr[M(\cS)\in \cT]\leq 1-f(\Pr[M(\cS')\in \cT]).$$ 
Moreover, we say a mechanism is \fDPr iff it is \fDPr[$f$] for $f(x)=1-\delta - e^\epsilon\cdot x$.
\end{definition}

In this definition, $\Psi$ represents a part of the data points that can be disclosed. For instance, imagine a tabular dataset of individuals wherein the education level of each individual is anticipated to be publicly available. In this case, we can assign the feature $\Psi$ as the education level, and the definition of feature-DP necessitates that the adversary should be unable to confidently infer whether a data-point $x$ or $x'$ was used in the training set, given that $x$ and $x'$ have the same level of education, known to the adversary. As we increase the amount of information in $\Psi$, the feature DP definition becomes a weaker notion of privacy.
One can also view feature DP as a generalization of \emph{label differential privacy}~\citep{ghazi2021deep,malek2021antipodes, tang2022machine}, where the adversary is assumed to have access to all the features except for the label.

\myparagraph{Considerations for public leakage}  As we discussed, our definition of feature DP allows for the leakage of a feature $\Psi(x)$ for all examples $x$ in the dataset. This leakage might violate some protections that would be otherwise protected through traditional notion differential privacy. Consider the case of privacy for license plates where we allow everything except for the license plates in the image to be leaked. One limitation with this approach is that the public feature can be used for \textit{identification}. If one sees an image of Batmobile, they can likely identify the owner of the car without seeing the license plate. For this very reason, feature DP would not necessarily provide upper bounds on membership inference attacks. Rather, as we discuss in Section \ref{sec:feature_DP}, feature DP would prevent \textit{attribute inference} on the private features.

The second consideration is potential imperfections in the public features that might become problematic for privacy. For example, in the case of masking license plates, the masking will most likely be automated using a detector. If there are errors in the detector then the public feature might leak more information than intended. This issue is however not unique to feature-DP, as automated filters such PII redaction are already being used as tools for privacy protection. The quality of these filters needs to be assessed comprehensively to avoid privacy violations.  In any case, the choice of the public feature will be determined by the privacy requirements; it is not a design choice for us. Although choosing the public feature is a subtle task, our contribution is to realize privacy for any chosen public attribute. We do not recommend using the public features that we use in our empirical studies in the paper. 

\myparagraph{Insertion/deletion variant of feature DP} Definition \ref{def:fdp} applies to the replacement notion of adjacency. In recent years, the community has mostly focused on the notion of ``insertion/deletion'' when reporting the differential privacy parameters. At first glance it might seem like it is not possible to define a ``insertion/deletion'' for feature differential privacy because by leaking the public features, one would also leak the size of the dataset. In fact, to the best of our knowledge, previous work have not considered such a definition for the notion of label differential privacy either. However, it is possible to define the ``insertion/deletion'' variant of feature-DP (and label-DP as a special case) in a subtle way using a simulator.



\begin{definition} [Simulation based definition for insertion/deletion]
    Let $\Psi\colon \cX \to \cF$ be a feature and $f\colon [0,1]\to [0,1]$ be a trade-off function. We say a mechanism $M\colon\cX^* \to \cO$ is $f$ feature DP relative to $\Psi$, denoted by \fDPi[$f$], if there is a simulator $\Sim\colon \cX^*\times \cF \to \cO$ such that for all $x\in \cX$ and all neighboring datasets $\cS\approx^{id}\cS'=\cS\cup \set{x}$ and $\cT\subseteq \cO$ we have
\begin{align*}
&\small \Pr[M(\cS')\in \cT]\leq 1-f\Big(\Pr[\Sim(\cS,\Psi(x))\in \cT]\Big)\\
&\text{~~and~~}\\
&\Pr[\Sim(\cS,\Psi(x))\in \cT]\leq 1-f\Big(\Pr[M(\cS')\in \cT]\Big)
\end{align*}
We say $M$ is \fDPi when $f(x) = 1-\delta-e^\epsilon\cdot x$.
\end{definition}
\myparagraph{The role of simulator} Simulation-based definitions of security have a rich history in cryptography \citep{datta2005relationships}, and we are not the first to leverage this concept. Cryptographers use simulators to argue that an adversary cannot learn much about a random variable $X$ when given access to another related random variable $Y$. They demonstrate that for any adversary $A$ with access to $Y$, the adversary's output can be simulated without $Y$. This shows that adversary gains no extra information through $Y$. This notion is used in defining cryptographic primitives like Multi-Party Computation, Zero-Knowledge Proofs, and Program Obfuscation.

In our work, we apply this logic to differential privacy. The simulator attempts to simulate the output of a mechanism while knowing the entire training set expect for one sample for which it only knows some public features. We impose a restriction on the divergence between the mechanism's output and the simulator's output. It's important to note that the simulator is not part of the mechanism; it defines the mechanism's privacy. This approach allows us to define privacy for features akin to traditional DP's insertion/deletion privacy.

We now state two propositions that show how this notion relates to the earlier notion of feature DP with replacement and the traditional notion of DP.  
\begin{proposition}\label{prop:fdpi_to_FDPr}
    If a mechanism $M$ satisfies feature \fDPi[$f$], then it satisfies \fDPr[$(f \circ f)$] . Similarly, if $M$ satisfies \fDPr[$f$], then it also satisfies \fDPi[$f$].
\end{proposition}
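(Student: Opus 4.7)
The proposition is a round-trip equivalence between the simulator-based insertion/deletion variant and the replacement variant of feature DP, with the insertion-to-replacement direction paying a factor-of-two composition of $f$. My plan splits cleanly into the two directions, both pivoting on the simulator: in the forward direction the simulator is a \emph{bridge} between the two halves of a replacement pair; in the reverse direction I will \emph{construct} a simulator by running $M$ on a canonically-modified dataset.

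\textbf{Forward direction (\fDPi[$f$] $\Rightarrow$ \fDPr[$(f\circ f)$]).} For any pair of replacement-neighbors $\cS \approx^r \cS'$ with $\cS \Delta \cS' = \set{x, x'}$ and $\Psi(x) = \Psi(x') = u$, I would set $\cS_0 := \cS \cap \cS'$, so that $\cS = \cS_0 \cup \set{x}$ and $\cS' = \cS_0 \cup \set{x'}$ are both insertion-neighbors of the common sub-dataset $\cS_0$. The hypothesis supplies a simulator $\Sim$ satisfying the \fDPi[$f$] conditions, and I would apply those conditions separately to the two insertions; since $\Psi(x) = \Psi(x') = u$, the two simulator calls coincide in the single quantity $\Pr[\Sim(\cS_0, u) \in \cT]$, giving, for every event $\cT \subseteq \cO$,
\begin{align*}
\Pr[M(\cS) \in \cT] &\leq 1 - f\bigl(\Pr[\Sim(\cS_0, u) \in \cT]\bigr),\\
\Pr[\Sim(\cS_0, u) \in \cT] &\leq 1 - f\bigl(\Pr[M(\cS') \in \cT]\bigr).
\end{align*}
Chaining these two inequalities using the monotonicity (non-increasingness) of the trade-off function $f$ eliminates the intermediate simulator probability and yields a single inequality of the desired form $\Pr[M(\cS) \in \cT] \leq 1 - (f \circ f)(\Pr[M(\cS')\in \cT])$, which is exactly the \fDPr[$(f \circ f)$] condition.

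\textbf{Reverse direction (\fDPr[$f$] $\Rightarrow$ \fDPi[$f$]).} For each $u$ in the image of $\Psi$, pick a representative $x_u \in \Psi^{-1}(u)$ and define the simulator by $\Sim(\cS, u) := M(\cS \cup \set{x_u})$. For any insertion pair $\cS \approx^{id} \cS \cup \set{x}$ with $\Psi(x) = u$, the two datasets $\cS \cup \set{x}$ and $\cS \cup \set{x_u}$ are themselves replacement-neighbors whose two differing elements share the public feature $u$. Applying the \fDPr[$f$] hypothesis to this pair in both directions immediately produces the two \fDPi[$f$] inequalities for the constructed simulator, verbatim.

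\textbf{Main obstacle.} The delicate step is the chaining in the forward direction: monotonicity alone gives a bound of the form $\Pr[M(\cS)\in \cT] \leq 1 - f\bigl(1 - f(\Pr[M(\cS')\in\cT])\bigr)$, and matching this to the paper's ``$f\circ f$'' convention requires a small sanity check against canonical cases---e.g., specializing to $f(x) = 1 - \delta - e^\epsilon x$ recovers the standard group-privacy blow-up between two replaced samples, and for a Gaussian trade-off function $G_\mu$ it recovers $G_{2\mu}$. The reverse direction is almost definitional once the representatives $\set{x_u}$ are fixed, with only a mild appeal to choice over $\Psi(\cX)$.
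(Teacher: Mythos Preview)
Your proposal is correct and follows the same approach as the paper: the forward direction bridges the two replacement-neighbors through the common simulator call $\Sim(\cS\cap\cS',u)$ and chains the two one-sided inequalities, while the reverse direction builds the simulator by running $M$ on $\cS$ augmented with a fixed preimage of $u$. The paper's own proof only spells out the $(\epsilon,\delta)$ special case and defers the general-$f$ statement to ``follows similarly''; your observation that the chain literally yields $1-f\bigl(1-f(\cdot)\bigr)$ rather than $1-f(f(\cdot))$, and that the paper's ``$f\circ f$'' must be read in that trade-off-composition sense, is exactly the right caveat and is consistent with the paper's $(\epsilon,\delta)$ computation.
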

It is also worth noting that if a mechanism satisfies DP with insertion/deletion, it also satisfies feature DP with insertion/deletion.
\begin{proposition}\label{prop:DP_to_FDP}
    If a mechanism is \DPi[$f$], then it is also \fDPi[$f$]  with respect to any feature $\Psi$.
\end{proposition}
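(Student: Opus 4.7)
The plan is to exhibit an explicit simulator and then invoke $f$-$\mathrm{DP}_i$ of $M$ directly. The natural candidate, given that DP already guarantees indistinguishability without any knowledge of the missing point, is the simulator that simply runs $M$ on $\cS$ while discarding the public feature it is handed, i.e. define
\[
\Sim(\cS, u) \;=\; M(\cS) \qquad \text{for every } u \in \cF.
\]
This ignores $\Psi(x)$ entirely, which is exactly why this proposition should be nearly trivial: feature DP only asks for indistinguishability conditional on $\Psi(x)$, while plain DP already gives unconditional indistinguishability between $M(\cS)$ and $M(\cS \cup \{x\})$.

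With this simulator in hand, fix any $x \in \cX$, any pair $\cS \approx^{id} \cS' = \cS \cup \{x\}$, and any $\cT \subseteq \cO$. Since $\cS$ and $\cS'$ are insertion/deletion neighbors, the assumed $f$-$\mathrm{DP}_i$ guarantee of $M$ yields
\[
\Pr[M(\cS') \in \cT] \;\leq\; 1 - f\bigl(\Pr[M(\cS) \in \cT]\bigr)
\]
and, by swapping the roles of $\cS$ and $\cS'$ in the DP definition,
\[
\Pr[M(\cS) \in \cT] \;\leq\; 1 - f\bigl(\Pr[M(\cS') \in \cT]\bigr).
\]
Substituting $\Sim(\cS, \Psi(x)) = M(\cS)$ converts these two inequalities into precisely the two inequalities required by the \fDPi[$f$] definition.

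Since the bounds hold for every $x$, every pair of neighboring datasets, and every measurable $\cT$, the mechanism $M$ together with the chosen simulator satisfies \fDPi[$f$] with respect to an arbitrary feature map $\Psi$, completing the proof. The only potential subtlety is ensuring the DP definition is applied in both directions (to obtain both inequalities in the simulation-based definition); but this is immediate because $f$-DP is symmetric in the two neighboring datasets, so no extra work is needed there.
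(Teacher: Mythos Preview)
Your proof is correct and takes essentially the same approach as the paper: the paper's proof is the one-liner ``A simulator can always ignore $u$ and run $M(\cS)$,'' which is exactly the simulator you exhibit, and you have simply spelled out the two required inequalities explicitly.
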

In the rest of this section, we state important properties of the notion of feature DP. We first prove that composition could be done similar to DP. Then we state the implication of feature-DP for attribute inference. 

\subsection{Composition}
Now we prove composition for feature DP. Again, because of the subtleties of the definition, we need to prove the composition separately for the cases of replacement and insertion/deletion. For this, we directly work with the notion of $f$-DP. Although the proof of this composition is very similar to that of composition for DP, it still has some subtleties with the notion of simulator. As the composition of mechanisms could be adaptive, the simulator for each mechanism could depend on the state of the previous mechanisms. So we need to construct a simulator by adaptively composing the simulator for each mechanism and then argue about the closeness of distributions.

\begin{thm}[Composition for feature DP] \label{thm:FDP_composition} Let $\set{f_j=T(X_j,Y_j);j\in [k]}$ be a series of trade-off functions and $\set{M_j\colon \cX^* \to \cO; j\in[k]}$ a series of mechanisms where each $M_j$ is \fDPi[$f_j$], where $\Psi\colon \cX \to \cF$ is an arbitrary feature. Then the adaptive composition of $M_j$s is \fDPi[$f$] for $$f=T(X_1\times\dots\times X_k, Y_1\times\dots\times Y_k).$$
\end{thm}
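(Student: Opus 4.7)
The plan is to construct the composite simulator by adaptively chaining the per-step simulators promised by each $M_j$'s feature-DP guarantee, and then reduce the two-sided trade-off inequalities in the \fDPi definition to the adaptive $f$-DP composition theorem recalled earlier in the preliminaries.

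Concretely, fix $x \in \cX$ and neighboring datasets $\cS \approx^{id} \cS' = \cS \cup \{x\}$, and let $\Sim_j$ be the simulator witnessing $M_j$ being \fDPi[$f_j$]. Since the composition is adaptive, each $M_j$ and $\Sim_j$ at round $j$ also takes the transcript $o_{<j} = (o_1, \dots, o_{j-1})$ of previously released outputs as input. I would define the composite simulator $\Sim(\cS, \Psi(x))$ to iteratively draw $o_j \sim \Sim_j(\cS, \Psi(x); o_{<j})$ and the composite mechanism $M(\cS')$ to draw $o_j \sim M_j(\cS'; o_{<j})$, each returning the full transcript $(o_1, \dots, o_k)$. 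The per-step \fDPi[$f_j$] guarantee then implies that, conditioned on any fixed past transcript $o_{<j}$, the pair of next-round distributions $(M_j(\cS'; o_{<j}),\ \Sim_j(\cS, \Psi(x); o_{<j}))$ satisfies a two-sided trade-off bound controlled by $f_j = T(X_j, Y_j)$.

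This is exactly the hypothesis of the adaptive $f$-DP composition theorem, with the usual ``pair of neighboring-dataset distributions'' replaced by the pair (mechanism-on-$\cS'$, simulator-on-$(\cS, \Psi(x))$). Applying that theorem yields that the joint distributions $M(\cS')$ and $\Sim(\cS, \Psi(x))$ admit a trade-off function at least $f = T(X_1 \times \cdots \times X_k,\ Y_1 \times \cdots \times Y_k)$, which is precisely what is needed for the composed mechanism to be \fDPi[$f$] relative to $\Psi$.

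The main obstacle I anticipate is carefully bookkeeping the adaptive state: $M$ operates on $\cS'$ and $\Sim$ operates on $(\cS, \Psi(x))$, but when invoking the per-step trade-off one must condition both distributions on the \emph{same} realized past transcript $o_{<j}$ — the adversary's single view — rather than on two independently sampled transcripts, so that the $f$-DP composition hypothesis actually applies. A secondary subtlety is that feature-DP requires two-sided indistinguishability (both $T(M, \Sim) \geq f$ and $T(\Sim, M) \geq f$), while the composition theorem is stated in one direction; I would invoke it twice, once in each direction, using the two-sided per-step guarantee to obtain the reverse product trade-off by swapping the roles of $X_j$ and $Y_j$ in the construction.
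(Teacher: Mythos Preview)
Your approach is correct and structurally aligned with the paper's: both build the composite simulator by adaptively chaining the per-round simulators $\Sim_j$, and then argue that the pair $(M(\cS'),\Sim(\cS,\Psi(x)))$ enjoys the product trade-off. Where you diverge is in the second step. You invoke the adaptive $f$-DP composition theorem from the preliminaries as a black box, observing that its proof is really a statement about any two adaptive processes whose per-step conditional distributions are $f_j$-close, not specifically about a single mechanism run on two neighboring datasets. The paper instead re-derives that very result in place: it passes to hockey-stick divergences via the equivalence between trade-off domination and $HS_\alpha$-domination, and then unrolls the chain-rule computation for $HS_\alpha$ over the two rounds to show $(M,S)$ is dominated by $(X_1\times X_2, Y_1\times Y_2)$. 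Your route is shorter and more modular; the paper's is more self-contained and makes the adaptive conditioning explicit in the algebra.

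One small imprecision: in your handling of the two-sided requirement you speak of ``swapping the roles of $X_j$ and $Y_j$.'' You do not actually need to swap anything. The per-step \fDPi[$f_j$] definition already gives you both $T(\Sim_j,M_j)\ge f_j$ and $T(M_j,\Sim_j)\ge f_j$ with the \emph{same} $f_j=T(X_j,Y_j)$, so applying the composition theorem once with $(P_j,Q_j)=(\Sim_j,M_j)$ and once with $(P_j,Q_j)=(M_j,\Sim_j)$ yields both required directions bounded by the same $f=T(\prod X_j,\prod Y_j)$.
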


\subsection{Feature DP limits attribute inference}
Differential privacy nicely maps to the threat model of membership inference attacks. It prevents an attacker from guessing whether or not a point $x$ was included in the dataset. This prevention holds even if the adversary has a non-uniform prior about the inclusion of the point $x$. That is, even if the adversary knows the point $x$ is in the training set with probability $90\%$, even before observing the output of the DP mechanism, she will still remain uncertain after observing the model, albeit with less uncertainty. This stems from the ``hypothesis testing'' interpretation of differential privacy.

In the following, we develop a threat model that corresponds to feature DP. This threat model bears similarity to the threat of attribute inference attacks studied in previous work \citep{jayaraman2022attribute,jia2018attriguard}.
To formulate our threat model, we consider an adversary who has a dataset $\cS$ in mind. The adversary also chooses a feature value $u\in \cF$, and also a distribution $X$ over points in $\cX$. Then, we sample a point from the conditional distribution, conditioned on the feature value, $x\sim X \mid  \Psi(X) = u$. We run the mechanism $\theta\gets M(S \cup \set{x})$, and reveal $\theta$ to the adversary. The adversary then needs to reconstruct $x$, based on $\theta$. Note that the adversary already has the knowledge that $\Psi(x)=u$. The following definition exactly measure the power of adversary in reconstructing the rest of $x$ (beyond what is known by $\Psi(x)=u$).

\begin{definition}
    Let $\Psi \colon \cX \to \cF$ be a feature, $X$ a distribution of points supported on $\cX$ and $u\in \cF$ a feature value. Let $M: X^* \to \Theta$ be a mechanism, and $d$ a distance metric and $\rho$ a threshold for the metric $d$. Let $A: \Theta \to X$ be an attribute inference attack, we define the success rate of this attack as:
    $$\adv(A, X, u, \cS, d, \rho)= \Pr_{\substack{x\sim [X\mid \Psi(X)=u]\\ \theta \sim M(S \cup \set{x})\\
    x' \sim A(\theta)}}[d(x', x)\leq \rho].$$
\end{definition}

Now, we aim at bounding this quantity for a mechanism that satisfies feature DP, for a feature $\Upsilon$. But before that, we need to define another notion, that quantifies how concentrated (according to metric $d$) the distribution $D\mid u$ is. 

\begin{definition}
We define $\ball(\cS, u, d, \rho)$ to be the maximum volume of a $d$-Ball of radius $\rho$ in the measure space of $X\mid u$. That is
$$\ball(X, u, d, \rho) = \sup_{x^* \in \supp(X)} \Pr_{x\sim X\mid u}[d(x,x^*)\leq \rho]$$
\end{definition}
Note that this quantity measures how well an adversary that does not even observe the output of the mechanism can reconstruct a point sampled from $X\mid u$. This notion essentially relates to entropy of distribution in a geometric sense. It is natural for any upper bound on the advantage of attribute inference to depend on this quantity as we can never make the upper bound to be smaller than this quantity. In other words, we can only try to minimize the blow-up in the probability of successful reconstruction, when the attacker can additionally observe the output of the mechanism. The following theorem does precisely that.  
\newpage
\begin{thm}\label{thm:att_inf}
Let $f\colon [0,1]\to[0,1]$ be a convex trade-off function and let $M$ be a mechanism that satisfies feature \fDPi[f] with respect to a feature $\Psi:\cX\to \cF$. Let $X$ be a data distribution and $u\in \cF$ be an arbitrary feature value. Then for all adversaries $A$ we have
$$\adv(A,X,u, \cS, d, \rho) \leq 1-f\Big(\ball(X,u, d,\rho)\Big).$$
\end{thm}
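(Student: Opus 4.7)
The plan is to exhibit a ``simulated'' reconstruction experiment in which the adversary's guess is independent of the target $x$, argue that in that experiment the success rate cannot exceed $\ball(X, u, d, \rho)$, and then transfer this bound to the real experiment via the feature-DP property. The key object is the simulator $\Sim$ guaranteed by the \fDPi[$f$] hypothesis: composing the adversary with it yields $\tilde \Sim(\cS, u) := A(\Sim(\cS, u))$. By a standard post-processing argument for $f$-DP---approximating the randomized $A$ by mixtures of deterministic tests and using convexity of $f$ through Jensen to recombine them---the feature-DP inequality lifts to the composed object: for every $x \in \cX$ and every measurable $\cT \subseteq \cX$,
$$\Pr[A(M(\cS\cup\{x\}))\in \cT] \leq 1 - f\big(\Pr[A(\Sim(\cS,u))\in \cT]\big).$$

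Next I would instantiate this inequality with the $x$-dependent success set $\cT_x := \set{x' : d(x', x) \leq \rho}$. For each fixed $x$ this yields a pointwise bound $q_M(x) \leq 1 - f(q_\Sim(x))$, where $q_M(x) := \Pr[d(A(M(\cS\cup\set{x})), x)\leq \rho]$ and $q_\Sim(x) := \Pr[d(A(\Sim(\cS,u)), x)\leq\rho]$ are the corresponding success probabilities with $x$ fixed. Taking expectation over $x \sim [X\mid \Psi(X)=u]$ recovers $\adv(A,X,u,\cS,d,\rho)$ on the left; since $1-f$ is concave (because $f$ is convex), Jensen's inequality pushes the expectation inside the right-hand side:
$$\adv(A, X, u, \cS, d, \rho) \leq 1 - f\big(\E_{x\sim X\mid u}[q_\Sim(x)]\big).$$

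Finally I would exploit independence in the simulated experiment: since $\Sim$ only sees $\cS$ and $u$, the guess $x' := A(\Sim(\cS, u))$ is independent of the target $x$. Swapping the order of integration gives
$$\E_{x\sim X\mid u}[q_\Sim(x)] = \E_{x'}\big[\Pr_{x\sim X\mid u}[d(x',x)\leq \rho]\big] \leq \ball(X, u, d, \rho),$$
where one may assume without loss of generality that $x' \in \supp(X)$. Combined with the fact that $f$ is decreasing, this establishes the theorem. The main subtlety---and the only non-routine ingredient---is that the success event depends on the very sample $x$ being reconstructed, forcing a set-by-set application of the feature-DP inequality followed by aggregation over $x$ via Jensen; this is precisely where the convexity hypothesis on $f$ is used, both for post-processing of randomized adversaries and for moving the expectation inside $1-f$.
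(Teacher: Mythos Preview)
Your proposal is correct and follows essentially the same route as the paper: apply the feature-DP inequality pointwise for each fixed $x$ (via the simulator $\Sim$), average over $x\sim X\mid u$, push the expectation inside $1-f$ by Jensen using convexity of $f$, and then bound the simulated success probability by $\ball(X,u,d,\rho)$ via Fubini and the fact that $\Sim$'s output is independent of $x$. The only cosmetic differences are that the paper works with the event $E_x=\{\theta:d(A(\theta),x)\le\rho\}$ on model outputs directly rather than first composing $A$ with the mechanisms, and it disposes of randomized adversaries by an averaging reduction to deterministic $A$ rather than your post-processing/Jensen argument; neither changes the substance.
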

\myparagraph{Comparing feature DP with DP:} In our experiment section, we perform experiments on several datasets to obtain utility privacy trade-off for both feature DP and DP and compare them. However, we note that the comparison between the privacy parameters for feature DP and DP is only relevant if our goal is to protect attribute inference attacks on the private features. We know if a mechanism is DP, then it will be safe against attribute inference attacks depending on how small the privacy parameter is. Interestingly, as we show in Theorem \ref{thm:att_inf}, we get the same level of safety against attribute inference attacks when we have mechanisms that are Feature DP with the same privacy parameter. This is why we have plotted DP and Feature-DP together in our experiments. 

\section{Optimization with Feature DP}\label{sec:optimization} 

In this section we aim at developing a framework to perform private optimization while leveraging the public features.
Assuming a dataset $D$ and a loss function $\ell$, we target the resolution of the optimization problem where we want to find $\min_\theta \sum_{x\in \cS}\ell(x,\theta)$, with feature differential privacy with respect to a feature $\Psi$. DP-SGD, the differentially private variant of SGD, is an algorithm designed to resolve this optimization with DP assurances. A single iteration of DP-SGD involves computing the gradient of the loss function on a randomly chosen subset of data points, aggregating these to yield the gradient of the total loss function on the batch, and subsequently introducing noise to the aggregated gradient. Under the assumption that loss function is smooth with Lipschitz constant $\ell$, DP  for each optimization step can be attained. The random batch selection further amplifies the privacy of each step, utilizing the known results on privacy amplification by sub-sampling~\citep{balle2018privacy}. Concurrently, considering that the addition of noise and subsampling doesn't bias the optimization, one can also obtain proven guarantees for DP-SGD's convergence, under the right set of assumptions on the loss function \citep{bassily2019private, kifer2012private}. Our goal is to alter the process of DP-SGD, so that we can leverage the public features and achieve better privacy-utility trade-off. 

\myparagraph{Challenge 1--Disentangling the public and private signals} One challenging problem with leveraging the public features is that in DP-SGD, the signal for training is in the form of gradients. Due to the complexity of loss functions in most applications, we cannot disentangle the private and public signals. To solve this problem, we propose a framework that tries to separate the signal from the public features by introducing a secondary loss function that is defined over the public features and is designed to predict the true loss function. Intuitively, we hope the loss function $\ell'$ has the property that $\ell(x,\theta) \approx \ell'(\Psi(x),\theta)$.
We reformulate the minimization problem as follows:
$$\min_\theta \underbrace{\sum_{x\in \cS} \Big(\ell(x,\theta)-\ell'(\Psi(x),\theta)\Big)}_{\text{private loss}} + \underbrace{\sum_{x \in \cS} \ell'(\Psi(x), \theta)}_{\text{public loss}}.$$

Note that although this is the exact same optimization, we now have two separate gradients from the two terms, while the second term only depends on the public features and is public. Now, if we ensure that $\ell(x,\theta)- \ell'(\Psi(x),\theta)$ possesses a smaller Lipschitz constant than that of $\ell$ we can improve the privacy analysis without changing the optimization objective. Although it might appear that we can instantly attain better privacy after this trick, given that we have reduced the Lipschitz constant of the loss function applied to private features, we yet need to make another change.
\\
\myparagraph{Challenge 2--Feature DP is not amplified by sub-sampling} The second challenge is regarding sub-sampling amplification. Unfortunately, feature-DP in general does not benefit from amplification by sub-sampling.

\begin{proposition}\label{prop:sub-samp-FDP}
    There exists a mechanism $M$ where $M  o~\poisson_p$ (Performing Poisson sub-sampling with probability $p$ before applying the mechanism $M$) is tightly \fDPi[$(\epsilon, 0)$] for all $p\in[0,1]$.
\end{proposition}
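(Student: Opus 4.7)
The plan is to exhibit a concrete witness $(M,\Psi)$ with $\epsilon=0$; since no mechanism is stronger than $(0,0)$-\fDPi, such an example is automatically tight, and the goal is to make this property stable under Poisson sub-sampling. The key structural observation is that the \fDPi simulator is handed both $\cS$ and $\Psi(x^*)$, so whenever the law of $M(\cS\cup\{x^*\})$ depends on its input only through $\cS$ together with the multiset $\{\Psi(x):x\in\cS\cup\{x^*\}\}$, a simulator can be built that matches $M$ exactly, yielding the trivial trade-off $f(\alpha)=1-\alpha$.

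Concretely, I would take $\cX=\{0,1\}^2$, let $\Psi((a,b))=a$, and set $M(\cS)=\sum_{x\in\cS}a(x)+\mathrm{Lap}(1/\lambda)$ for any fixed $\lambda>0$. Because $M$ reads only the public coordinate $a(x)=\Psi(x)$ of each data point, the distribution of $M(\cS)$ is determined entirely by the multiset of public values $\{a(x):x\in\cS\}$.

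The core step is, for every $p\in[0,1]$, to exhibit a simulator $\Sim(\cS,a^*)$ whose output distribution matches that of $M(\poisson_p(\cS\cup\{x^*\}))$ on the nose. I would define $\Sim(\cS,a^*)$ as follows: independently include each element of $\cS$ with probability $p$ to produce $\cS_{\mathrm{sub}}$, independently draw $B\sim\mathrm{Bernoulli}(p)$, and output $\sum_{y\in\cS_{\mathrm{sub}}}a(y)+B\cdot a^*+\mathrm{Lap}(1/\lambda)$. The coupling argument is immediate: since Poisson sub-sampling treats distinct elements independently, $(\cS_{\mathrm{sub}},B)$ has the same joint law as $(\poisson_p(\cS),\mathbf{1}[x^*\in\poisson_p(\cS\cup\{x^*\})])$, so the simulator's output coincides in distribution with $M\circ\poisson_p(\cS\cup\{x^*\})$, making both inequalities in the \fDPi definition hold as equalities.

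Tightness is automatic because $(0,0)$-\fDPi is the strongest possible privacy level, so the same $\epsilon=0$ is attained tightly for every $p\in[0,1]$, which proves the claim. The one place where I expect a careful proof to spend a sentence is the joint-distribution coupling between the simulator's internal randomness and the true Poisson sub-sampling of $\cS\cup\{x^*\}$, but this reduces to the pointwise independence of Bernoulli trials across distinct elements and is essentially routine.
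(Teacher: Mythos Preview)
Your construction is formally valid for the degenerate choice $\epsilon=0$: a mechanism that reads only $\Psi(x)$ is indeed $(0,0)$-\fDPi, this persists under Poisson sub-sampling via the coupling you describe, and $(0,0)$ is trivially tight since no smaller $\epsilon$ exists. So as a literal witness to the sentence ``there exists $M$ such that $M\circ\poisson_p$ is tightly $(\epsilon,0)$-\fDPi for all $p$'' with $\epsilon=0$, your argument is sound.

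However, this trivializes the proposition and does not establish its intended content. The proposition sits under the heading ``Feature DP is not amplified by sub-sampling'' and is invoked to justify the two-batch design of Algorithm~\ref{alg:feature-DP}; its point is that there exist mechanisms with a \emph{strictly positive} privacy cost $\epsilon>0$ for which sub-sampling fails to shrink $\epsilon$. Your example has nothing to amplify in the first place, so it says nothing about whether amplification can fail. The paper's own proof takes the non-trivial route: it builds a randomized-response mechanism on the first bit (the private coordinate), shows via an explicit simulator $\Sim$ that $M\circ\poisson_p$ is $(\epsilon,0)$-\fDPi for every $p$ with the \emph{same} $\epsilon>0$, and then argues tightness by exhibiting neighboring datasets and outcome sets on which the likelihood ratio hits $e^{\epsilon}$ regardless of $p$. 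That last step---showing no simulator can beat $\epsilon$ after sub-sampling---is exactly the ingredient absent from your approach, and it is what makes the proposition support the claim that sub-sampling does not help.

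If you want to repair your proof while keeping it simple, you would need to replace the Laplace-on-public-sum mechanism with one that genuinely depends on the private coordinate (so that $\epsilon>0$), and then carry out both halves: an upper bound via a simulator that handles the sub-sampling randomness, and a matching lower bound showing no simulator achieves $\epsilon'<\epsilon$ for any $p$.
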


This can unfortunately limit the use-cases of feature DP in private machine learning where random batch selection often plays an important role in achieving reasonable trade-off between accuracy, privacy, and efficiency. To address this challenge, we propose an alternate algorithm that employs two separate batches for the two losses. Our approach is detailed in Algorithm \ref{alg:feature-DP}. Using this alternative algorithm, we can still benefit from amplification by sub-sampling. In a  nutshell, we use two separate batches for the public and private loss functions. This avoids leakage of the random coins of Poisson sub-sampling and hence we can leverage that randomness in our privacy analysis.\\

\begin{algorithm}
\caption{Noisy SGD with Public Features}\label{alg:feature-DP}
\begin{algorithmic}[1]
\REQUIRE Public feature $\Psi$, Dataset $\cS$, Batch sizes $m$, $m'$, Learning rate schedule $\eta(t)$, standard deviation $\sigma$, Projection space $\mathcal{W} \in R^d$, Loss functions $\ell_{priv}, \ell_{pub}$, Number of iterations $T$
\STATE Initialize $\mathbf{w}_1 \in \mathcal{W}$
\FOR{$t = 1, \dots, T$}
    \STATE Sample a mini-batch $B^{priv}_t$ with Poisson sampling with probability $p=m/|\cS|$. 
    \STATE $g^{priv}_{t}~=~\frac{1}{m}\sum_{x\in B_t^{priv}}\nabla \ell_{priv}(\mathbf{w}_t; x)$
    \STATE Sample a second mini-batch $B^{pub}_t$ of size $m'$ uniformly at random.
    \STATE $g_t^{pub}=~\frac{1}{m'}\sum_{x\in B_t^{pub}}\nabla \ell_{pub}(\mathbf{w}_t; \Psi(x)).$
    \STATE Let $g_t~=~g_t^{pub} + g_t^{priv} + \mathcal{N}(0, \sigma^2)$
    \STATE Update $\mathbf{w}_{t+1} = \mathbf{w}_t - \eta(t)\cdot g_t$
    \STATE Project $\mathbf{w}_{t+1}$ into the set $\mathcal{W}$: $\mathbf{w}_{t+1} = \text{Proj}_{\mathcal{W}}(\mathbf{w}_{t+1})$
\ENDFOR
\STATE \textbf{return} $\mathsf{aggregate}(\mathbf{w}_1,\dots,\mathbf{w}_{T+1})$
\end{algorithmic}
\end{algorithm}

\myparagraph{Privacy analysis}
Now we present a formal analysis of our algorithm's privacy, followed by an evaluation of its utility in the convex case.

\begin{thm}\label{thm:privacy}[Privacy analysis]
Assume the private loss function in Algorithm \ref{alg:feature-DP} is $\tau$-Lipschitz. 
Then, Algorithm \ref{alg:feature-DP} is \fDPi[f] for $$f=T\Big(\cN(0,\sigma)^T, \big((1-p)\cN(0,\sigma) + p \cN(\tau, \sigma)\big)^T\Big).$$
In particular, setting
$$\sigma = c\frac{\tau m}{\epsilon\cdot n} \sqrt{T\log(\frac{1}{\delta}) \log(\frac{T}{\delta})},$$
the mechanism is  \fDPi.
\end{thm}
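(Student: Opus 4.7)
The strategy is to exhibit an explicit simulator for Algorithm~\ref{alg:feature-DP} witnessing \fDPi[$f_t$] at every iteration, then apply the adaptive composition Theorem~\ref{thm:FDP_composition} and finally the $f$-DP $\to(\epsilon,\delta)$-DP conversion. The key structural observation that makes the proof work is that Algorithm~\ref{alg:feature-DP} has intentionally been designed so that (a) the public gradient $g_t^{pub}$ depends on the data \emph{only through} $\Psi$, and (b) the private batch $B_t^{priv}$ is drawn by Poisson sub-sampling independently of the coins used for $B_t^{pub}$. Property (a) lets the simulator reproduce the public part exactly, and property (b) — the whole point of using two separate batches — lets sub-sampling amplification apply to the remaining private part.

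\textbf{Step 1 (simulator).} Given the input $(\cS,\Psi(x))$ expected by the \fDPi definition, $\Sim$ runs a modified copy of Algorithm~\ref{alg:feature-DP} in which: it samples $B_t^{pub}$ uniformly from $\cS$ augmented by a \emph{placeholder} carrying public feature $\Psi(x)$ (so $g_t^{pub}$ is computed as if $x$ were present — this is possible because $\ell_{pub}$ only touches $\Psi$); and it samples $B_t^{priv}$ by Poisson$(p)$ from $\cS$ alone, omitting $x$. Everything else (noise, projection, aggregation) is identical. By construction $\Sim$ sees only $\cS$ and $\Psi(x)$, as required.

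\textbf{Step 2 (per-step trade-off).} Couple $M(\cS\cup\{x\})$ and $\Sim(\cS,\Psi(x))$ so that they share the Poisson coins over $\cS$, the uniform coins for the public batch, and the Gaussian noise. Under this coupling, $g_t^{pub}$ and the $\cS$-part of $g_t^{priv}$ agree exactly; the only discrepancy is the extra contribution $\mathbf{1}[x\in B_t^{priv}]\cdot\nabla\ell_{priv}(\mathbf{w}_t;x)/m$ that is present in $M$ but not in $\Sim$. By the $\tau$-Lipschitz assumption, this vector has norm at most the per-sample sensitivity, and the indicator is Bernoulli$(p)$. Projecting onto the worst-case direction and adding the independent Gaussian noise, the one-step mechanism output is $f_t$-DP$_i^\Psi$-close to the simulator with $f_t=T\bigl(\cN(0,\sigma),\;(1-p)\cN(0,\sigma)+p\cN(\tau,\sigma)\bigr)$ (in the normalization convention of the theorem); both directions of the simulation inequality in the \fDPi definition follow by symmetry of the coupling. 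The projection step is post-processing and does not degrade the bound.

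\textbf{Step 3 (composition and conversion).} Adaptive composition via Theorem~\ref{thm:FDP_composition} yields the claimed trade-off function $f=T\bigl(\cN(0,\sigma)^T,\,((1-p)\cN(0,\sigma)+p\cN(\tau,\sigma))^T\bigr)$, adaptivity being handled because the simulators at different rounds are composed against the same $\mathbf{w}_t$-dependent randomness. For the second assertion, one plugs this $f$ into the $f$-DP $\to (\epsilon,\delta)$-DP conversion; the bound is exactly the subsampled-Gaussian accounting of \citep{abadi2016deep,gopi2021numerical,dong2019gaussian} with sampling rate $p=m/n$ and $T$ iterations, so choosing $\sigma=c\,\tau m/(\epsilon n)\sqrt{T\log(1/\delta)\log(T/\delta)}$ for a sufficient universal constant $c$ gives $(\epsilon,\delta)$.

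\textbf{Main obstacle.} The subtlest part is the \fDPi simulator construction itself: because publishing $\Psi(x)$ also reveals that an element was \emph{added}, one must be careful about the size mismatch between $\cS$ and $\cS\cup\{x\}$ — in particular, the Poisson rate $p=m/|\cS|$ changes — and verify that the placeholder trick for $B_t^{pub}$ correctly couples the two distributions on the public side. Once the simulator is in place, the rest is essentially bookkeeping on top of the standard subsampled-Gaussian accounting; the novelty here is solely that the two-batch decomposition lets the accounting be done on $\tau$-Lipschitz gradients (the private residual) rather than on the full gradient of $\ell$.
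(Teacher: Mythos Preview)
Your proposal is correct and matches the paper's approach closely: the paper likewise constructs an explicit simulator (its Algorithm~\ref{alg:feature-DP-sim}) that inserts a placeholder $x'$ with $\Psi(x')=u$ for the public batch while Poisson-sampling the private batch from $\cS$ alone, then reduces the per-step discrepancy to a subsampled Gaussian with sensitivity $\tau$, composes via Theorem~\ref{thm:FDP_composition}, and finally converts to $(\epsilon,\delta)$. The only presentational difference is that the paper first states an auxiliary proposition (for the replacement case) that composing a feature-DP mechanism with any mechanism operating solely on $\Psi(\cS)$ preserves the trade-off function, and phrases the per-step bound via hockey-stick domination rather than your coupling language; these are the same argument in different clothing.
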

Note that standard deviation of the noise is independent of the lipschitz constant of $\ell_{pub}$ and this enables obtaining a better utility for the same privacy.\\
\begin{remark}[Clipping instead of lipschitsness]
Although our privacy analysis in Theorem \ref{thm:privacy} is based on the lipschitz constant of the private loss function, we only need the private gradient to be bounded for the privacy analysis to hold. To privatize non-lipschitz loss functions, we can simply clip the gradient to an arbitrary threshold and use that instead of the lipschitz constant to obtain the privacy guarantee.
\end{remark}
\myparagraph{Utility analysis}  To illustrate the potential utility gain from the public features, we show how to bound the excess risk of the algorithm for convex and Lipschitz loss functions.

\begin{thm}[Excess empirical risk]\label{thm:utility} Assume $\ell=\ell_{priv}+\ell_{pub}$ is convex, and $\tau'$-Lipschitz. Let $M=\max_{w\in \mathcal{W}} \mid\!\mid w\mid\!\mid$. Let $l^*=\min_{w\in \cW} \sum_{x\in \cS} \ell(x)$. Then, setting $m'=|\cS|$ in Algorithm \ref{alg:feature-DP}, 
and using learning rate $\eta(t)=\frac{c}{\sqrt{t}}$
we have
\begin{align*}\E[\sum_{x\in \cS} \ell(\mathbf{w}_{T+1},x)] - l^*
\leq (\frac{M^2}{c} + c\tau'^2 + cd\sigma^2)\frac{2+\log(T)}{\sqrt{T}}.
\end{align*}
Moreover, if $\ell$ is $\lambda$-strongly convex, with $\eta(t)=\frac{1}{t\lambda}$ we have 
$$\E[\sum_{x\in \cS}\ell(\mathbf{w}_{T+1},x)] - l^*
\leq 17(\tau'^2 + d\sigma^2)\frac{1+\log(T)}{\lambda T}.$$
\end{thm}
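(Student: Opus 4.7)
The approach is to recognize Algorithm \ref{alg:feature-DP} with $m'=|\cS|$ as projected stochastic gradient descent on the empirical loss $L(w)=\sum_{x\in\cS}\ell(w,x)$, and then apply standard convex SGD analysis. Three features of the update make this clean. With $m'=n:=|\cS|$, the public mini-batch equals $\cS$, so $g_t^{pub}=\tfrac{1}{n}\nabla L_{pub}(w_t)$ is deterministic; Poisson subsampling with probability $p=m/n$ makes $g_t^{priv}$ an unbiased estimator of $\tfrac{1}{n}\nabla L_{priv}(w_t)$; and the injected $\cN(0,\sigma^2 I_d)$ has mean zero. Consequently $g_t$ is an unbiased noisy gradient of $L$ (up to a $1/n$ rescaling), and $\text{Proj}_{\cW}$ is non-expansive.

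First I would establish the standard one-step descent inequality
\begin{equation*}
\E\|w_{t+1}-w^*\|^2 \;\le\; \E\|w_t-w^*\|^2 \;-\; \tfrac{2\eta(t)}{n}\,\E\bigl[L(w_t)-L(w^*)\bigr] \;+\; \eta(t)^2\,\E\|g_t\|^2,
\end{equation*}
obtained by expanding $\|w_{t+1}-w^*\|^2$, using non-expansiveness of the projection, taking expectation, and applying convexity of $L$. Next I would bound $\E\|g_t\|^2\lesssim \tau'^2+d\sigma^2$ by decomposing $g_t=\hat g_t+\xi_t$ with $\xi_t\sim\cN(0,\sigma^2 I_d)$ independent, which gives $\E\|g_t\|^2=\E\|\hat g_t\|^2+d\sigma^2$, and then controlling $\E\|\hat g_t\|^2$ through $\tau'$-Lipschitzness of $\ell=\ell_{priv}+\ell_{pub}$ combined with Jensen's inequality applied to each of the two independent sample averages.

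For the convex case with $\eta(t)=c/\sqrt t$, I sum the descent inequality over $t=1,\dots,T$, telescope $\E\|w_t-w^*\|^2$, and bound $\|w_1-w^*\|^2\le 4M^2$, $\sum_t t^{-1/2}\le 2\sqrt T$, and $\sum_t t^{-1}\le 1+\log T$. Jensen's inequality applied through the weighted-average $\mathsf{aggregate}$ step then converts $\sum_t \eta(t)\,\E[L(w_t)-L(w^*)]$ into $\E[L(w_{T+1})-L(w^*)]$, yielding the advertised $(M^2/c+c\tau'^2+cd\sigma^2)(2+\log T)/\sqrt T$ bound (with scaling factors absorbed into $c$). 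For the $\lambda$-strongly convex case, I would sharpen the inequality using $\langle\nabla L(w_t),w_t-w^*\rangle\ge L(w_t)-L(w^*)+\tfrac{\lambda}{2}\|w_t-w^*\|^2$, plug in $\eta(t)=1/(t\lambda)$, and run the standard strongly-convex SGD recursion $a_{t+1}\le(1-1/t)a_t+O((\tau'^2+d\sigma^2)/(\lambda t)^2)$ (as in \citep{bassily2019private}) to obtain the $17(\tau'^2+d\sigma^2)(1+\log T)/(\lambda T)$ rate.

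The one nontrivial step is bounding $\E\|\hat g_t\|^2$ cleanly by $O(\tau'^2)$: because the public and private mini-batches are drawn independently, $\hat g_t=g_t^{pub}+g_t^{priv}$ is not a single-batch empirical gradient, so $\|\hat g_t\|\le\tau'$ is not immediate from $\tau'$-Lipschitzness of the sum. In particular, the Poisson-sampling variance of $g_t^{priv}$ contributes an extra term of order $(1-p)\tau'^2/m$ that must be absorbed into the overall constant. All downstream steps are routine convex SGD bookkeeping.
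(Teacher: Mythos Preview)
Your high-level plan---verify that $g_t$ is an unbiased stochastic gradient of the empirical loss, bound $\E\|g_t\|^2$ by $\tau'^2+d\sigma^2$, and then invoke convex SGD convergence---is exactly the paper's. The paper's proof is in fact much terser than yours: it simply asserts unbiasedness and the second-moment bound, then cites Theorems~2 and~3 of Shamir--Zhang (2013) as black boxes, from which the two displayed bounds are read off verbatim. Your discussion of why unbiasedness holds (deterministic $g_t^{pub}$ when $m'=|\cS|$, Poisson for $g_t^{priv}$) and your care about the $\E\|\hat g_t\|^2$ term with independent batches are more detailed than anything the paper writes.

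One point to flag: the specific constants $(2+\log T)/\sqrt T$ and $17(1+\log T)/(\lambda T)$ are Shamir--Zhang's \emph{last-iterate} bounds, and in the algorithm $\mathbf w_{T+1}$ is the final projected iterate, not the aggregate. Your proposed route---telescope the one-step inequality and then apply Jensen through a weighted-average $\mathsf{aggregate}$---bounds the averaged iterate, not the last one, and would produce different (actually slightly better) constants. To recover the stated bound on $\mathbf w_{T+1}$ itself you need Shamir--Zhang's localization argument (comparing $w_T$ to each $w_k$ rather than only to $w^*$), which is not a simple telescoping sum. So either interpret the theorem as being about the aggregate and accept that your constants won't match exactly, or replace your telescoping step with a direct citation of Shamir--Zhang, as the paper does.
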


Note that this excess empirical risk is smaller than what one can obtain from the analysis of DP-SGD (see Lemma 3.3 in \citep{bassily2019private}) because our noise is proportional to $\tau$ (the lipschitz function of the private portion of the loss function), instead of $\tau'$ (the lipschitz contat of the entire loss). We now illustrate the utility gain of our analysis for logistic regression in the specific case of Label-DP.
\\
\myparagraph{Case Study: Logistic Regression}
Suppose we have a  datasets that is of the form $\cS=\set{(x_i,y_i); i\in[n]}$ where $x_i \in [0,1]^{d}$ and $y_i \in  \set{0,1}^k$ is the one-hot encoding of the label. We aim at minimizing the logistic loss
    $$\min_{\mathbf{w}\in \R^{d\times k}}\ell(\mathbf{w}, \cS)=\sum_{i\in [n]} -y_i \log(\softmax(\mathbf{w}^T x_i))$$
    while preserving the privacy of labels. Namely, the leaked feature $\Psi$ is defined by $\Psi(x,y)=x$. (This setting is also referred to as Label DP~\citep{ghazi2021deep}). Let us define the public loss function $\ell_{pub}(\mathbf{w},x)$ so that we have $\nabla \ell_{pub} = x\cdot p^T$, where $p$ is the probability vector, namely, 
$p=\softmax(x\cdot \mathbf{w})$ . Since this public loss does not depend on the label $y$, it can indeed be treated as public. Then, we set the 
private loss function $\ell_{priv}(\mathbf{w},x,y) = \ell(\mathbf{w},x,y) - \ell_{pub}(\mathbf{w},x)$. Consequently, we can say that $\ell=\ell_{priv}+\ell_{pub}$.
Using these definitions of private and public losses in our Algorithm \ref{alg:feature-DP}, we can derive the concrete bounds for this scenario.
Assume that the input space $X=\set{x\in R^d, \lvert\!\lvert x\rvert\!\rvert \leq 1}$. Observe that the gradient of $\ell_{priv}$ equals $-xy^T$.
Therefore, the loss function $\ell_{priv}$ is $1$-Lipschitz. This is while the gradient of $\ell$ is equal to $-x(y^T - p^T)$, which is only $\sqrt{2}$-Lipschitz. 
Comparing this bound for the bound one would obtain from analyzing DP-SGD, we are saving a factor of $\sqrt{2}$ in the Lipschitz constant, therefore, we can use a smaller $\sigma$ to obtain the same $\epsilon$ and $\delta$. Therefore, we save a factor two in the dominant $cd\sigma^2$ term in the excess risk analysis.  \\
\myparagraph{Comparison with Ghazi et al., 2021}
The work of \citep{ghazi2021deep} presents an excess risk bound for label-DP under convex functions. In their work, they use an algorithm where they use batch of size 1 at each round, but instead of leverging amplification by sub-sampling, they use the fact that the sub-space of all possible gradients has a low rank (the rank is equal to the number of labels). Hence, they can only add noise to the basis of the sub-space which can reduce the magnitude of the noise significantly. Although this is a nice way to avoid privacy amplification by sub-sampling, it remains sub-optimal for the cases that a huge dataset is available. 
In their work they raise an open question on if one can leverage sub-sampling for label-DP. Our work answers their open question affirmatively. For the logistic regression setting, their empirical excess risk is bounded by $$O(\frac{k}{\epsilon^2} \log(\frac{1}{\delta})).$$
This is while our excess empirical risk bounded by $$O(\frac{d}{\epsilon^2\cdot |\cS|^2} \log(\frac{1}{\delta})).$$ The division by $|\cS|^2$ in our case is due to our ability to leverage amplification by sub-sampling, a capability they lack. Particularly, as long as the number of examples $|\cS|$ is greater than $\sqrt{d/k}$, our algorithm yields a lower excess risk than theirs.

\begin{remark}
We note that our Algorithm \ref{alg:feature-DP} can be combined with an arbitrary number of public pre-training steps. That is, we can perform as many steps with public loss function as needed, and that would not affect the privacy analysis. Indeed, in some of our experiments (see Section \ref{sec:experiments}) we find that this public pre-training can lead to significant improvements in the results. 
\end{remark}
\section{Experimental Results}\label{sec:experiments}

\subsection{Datasets and associated public features} We use Purchase100, Criteo, LSUN, and AFHQ. Below, we specify what features we considered public for each dataset.

\begin{itemize}[noitemsep, topsep=0pt, leftmargin=*]
    \item Purchase100 is a tabular dataset that contain 600 real valued features and a label from a set of 100 possible labels. In our experiments, we consider the case where a set of 100 features (out of 600) are considered public. We also consider the label to be public.
    \item Criteo is a recommendation dataset containing a mix of 27 categorical and 13 numerical features. Each row in the dataset is labeled with a binary class which we consider public. We also consider the numerical features to be public and keep the categorical features private. This choice of public and private features makes this task a very hard task for feature DP. This is because the categorical features alone are enough to get an accuracy as high as ~74\%, while with the mix of numerical and features we can get up-to ~75\%, which is a small improvement. We note that in our experiments we use a subset of 40,000 random samples from day 0 of the Criteo dataset for training and another 10,000 samples for validation/testing. 
    \item AFHQ is an animal face dataset. We use Gaussian blurring to blur each image and consider that to be a public version of the dataset (See Figure \ref{fig:afhq_samples}).
    \item LSUN is a dataset that contains images of scenes such as bedrooms, churches and bridges. Again, we use blurring for public feature (Figure \ref{fig:lsun_samples} in Appendix).
\end{itemize}

\subsection{Classification experiments}

\myparagraph{Training} 
\begin{figure}
  \centering
  \includegraphics[width=0.45\textwidth]{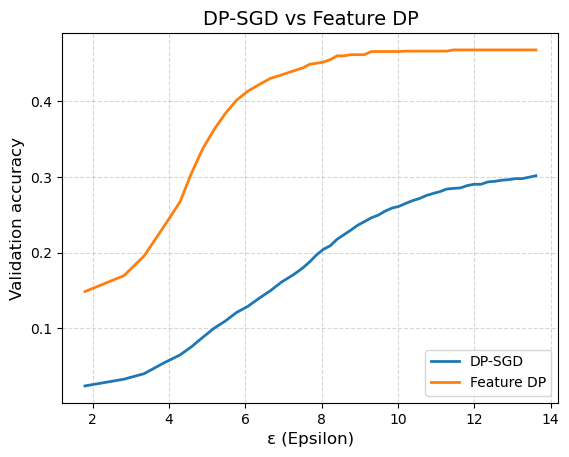}
  \caption{Purchase100: Feature DP vs DP}
  \label{fig:result}
\end{figure}
For the Purchase100 dataset, we use Algorithm \ref{alg:feature-DP} to train the model. We first create a public loss function as follows. Let $g: \R^{100} \to \R^{600}$ be a Gaussian padding that extends the 100 public features into a full vector of size $600$ and fills the private features with Gaussian noise $\mathcal{N}(0,1)$. We then define the public loss function  $\ell_{pub}(\mathbf{w}, \Psi(x),y)= \ell(\mathbf{w}, g(\Psi(x)), y)$ where $\ell$ is the cross entropy loss function. Then we set $\ell_{priv}(\mathbf{w}, x,y)=\ell(\mathbf{w},x,y) - \ell_{pub}(\mathbf{w},x,y)$. We use clipping of the gradient from the private loss function. In particular, we calculate $g_{priv}=\nabla \ell_{priv}\cdot \frac{\min(\lvert \nabla \ell_{priv}\rvert, C)}{\lvert\nabla \ell_{priv}\rvert}$ for $C=0.01$. Then we add Gaussian noise to get $\tilde{g}_{pub}= g_{pub} + \mathcal{N}(0,C^2\sigma^2)$. We also calculate $g_{pub}=\nabla \ell_{pub}$. Now, since clipping has biased the ratio between the norm of $g_{pub}$ and $\tilde{g}_{priv}$, we aggregate them using a ratio $\alpha$, $g=g_{pub}+ \alpha \tilde{g}_{priv}.$ We fix $\alpha$ in all iterations and tune it as a hyperparameter. We use a learning rate of $0.1$ and use momentum of $0.9$ to update our model.

For Criteo, we use a simple logistic regression model, with categorical features encoded using one-hot encoding. To enable this, we use sparse embeddings of dimension 1. To mask the private features (the categorical features), we set them to zero everywhere, which entirely eliminates the effect of categorical features. Specifically, assuming that the one-hot encoding of categorical features is $x_c$, the numerical features are $x_n$, the weight of the linear model for categorical features is $w_c$, the weight for numerical features is $w_n$, and $y$ is the label, the private loss becomes 

$$\mathbf{logistic}( w_c \times x_c + w_n \times x_n, y),$$ 

and the public loss becomes 

$$\mathbf{logistic}(w_n \times x_n, y).$$ 

Note that we could alternatively use a random value for each categorical feature to mask the true values, but we found that this leads to suboptimal performance. We also found that, for Criteo, performing 10 epochs of public pre-training (using only the public loss function) can significantly improve the performance. Interestingly, after this pre-training step, the best approach to achieve feature DP for Criteo is to perform a few steps of DP-SGD without using our formulation in Algorithm \ref{alg:feature-DP}. We believe this is because of the choice of public and private features . In particular, there is not much signal that can be extracted from the public features alone unless they are trained in conjunction with the private features. This also explains why the improvement from public features is lower in comparison with purchase dataset. Note that for Purchase dataset this does not hold. Namely, using public features for pretraining and then finetuning with all the features achieves a very limited improvement over DP-SGD. Specifically at epsilon values of 4.0 and 8.0, it achieves 9.2\% and 24.6\% respectively. This is in comparison with 24.2\% and 46.8\% for our algorithm.




  \begin{figure}
      \centering
      \includegraphics[width=1.0\linewidth]{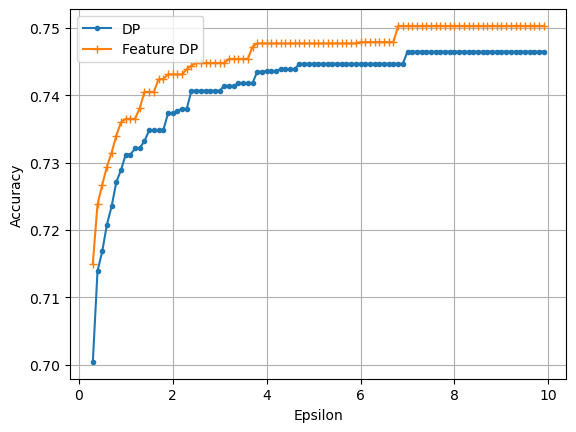}
      \caption{Comparing the privacy and utility trade-off for feature-DP and DP on Criteo dataset.}
      \label{fig:criteo}
  \end{figure}

\myparagraph{Results}
Figure \ref{fig:result} shows the result of our experiments for the Purchase100 dataset. As it is clear from the figure, for all values of $\epsilon$, we are able to improve the accuracy by $10-20\%$. This comes at the cost of leaking $100$ features from the set of $600$ features available. For both experiments in Figure \ref{fig:result}, the sampling rate is $1/16$ and the noise multiplier is set to $1.0$. The optimal learning rate schedule is different for two cases and is tuned for best results. 

Similarly, for the Criteo dataset, we observe that compared to DP, feature DP can improve the accuracy when leaking the numerical features. In the experiments, we tuned several hyperparameters. We used a range of clipping thresholds: [0.1, 1.0, 5.0, 10.0]. We considered batch sizes of [1,000, 2,000, 4,000, 8,000]. We also varied the noise multiplier from 1.0 to 6.0 and considered 1 to 20 epochs for the number of training steps. For each value of $\epsilon$, we first found the best set of hyperparameters for both DP and feature DP and then ran the experiments again with those hyperparameters. We repeated each experiment 5 times and reported the results in Figure \ref{fig:criteo}.




\begin{figure*}[ht]
    \centering
        \includegraphics[width=0.9\textwidth]{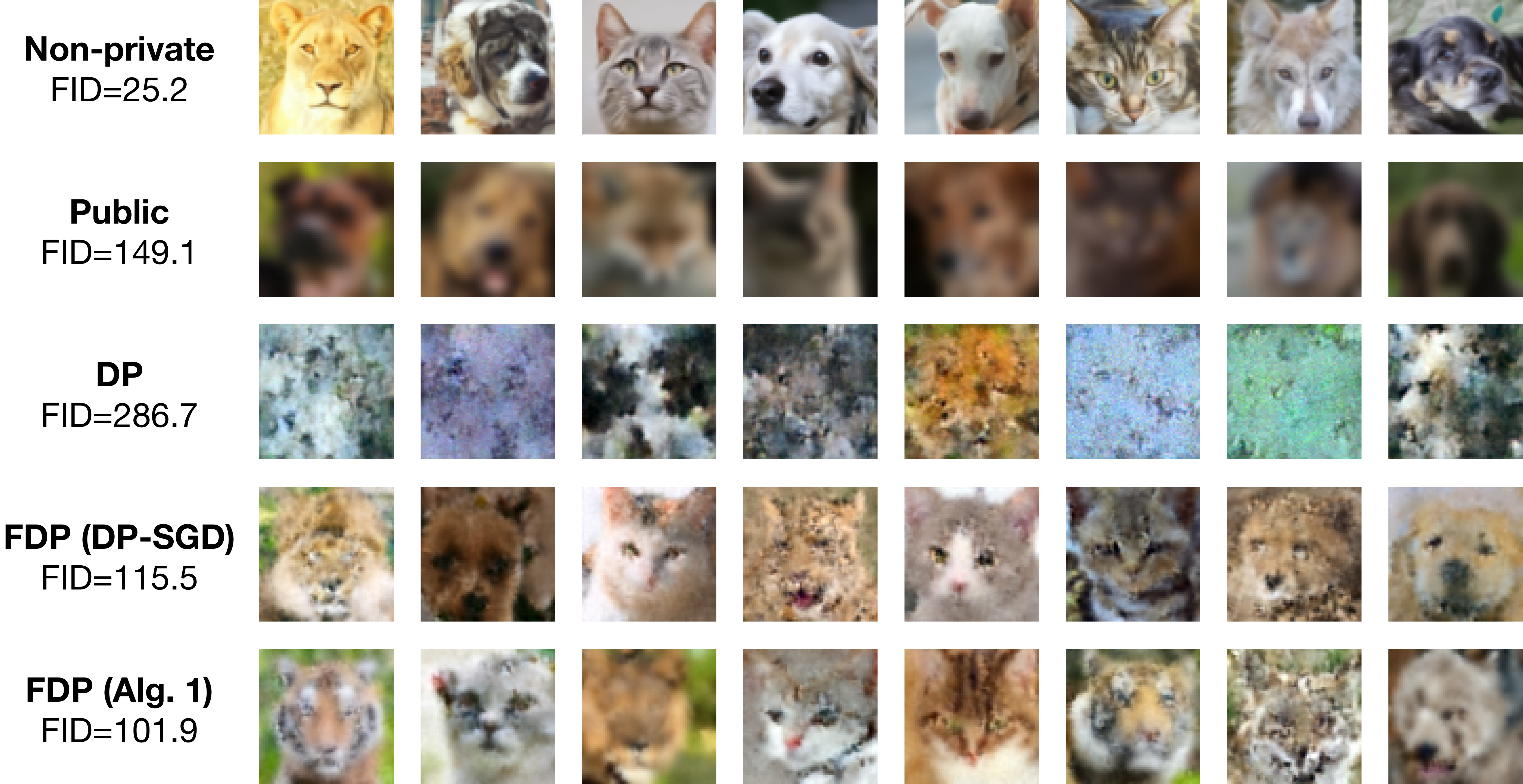}
    \caption{Generated image samples for AFHQ. FDP models trained using DP-SGD or Alg. \ref{alg:feature-DP} are able to produce clearly recognizable images of animal faces even under a strict FDP guarantee of $\epsilon=8$.}
    \label{fig:afhq_samples}
\end{figure*}
\subsection{Diffusion model experiments}
To demonstrate the versatility of our definition of feature DP, we apply it to the use case of training diffusion models for image generation. Specifically, given a private image, we consider its blurred version as the public feature. This form of blurring operation is commonly perceived to protect the privacy of images such as in face-blurring~\citep{yang2022study}. Thus, under our definition, training on the blurred version of an image satisfies $(0, 0)$-FDP with the blurring operation $\Psi$ as the feature map. Our goal is to train diffusion models (from scratch) that are capable of generating image samples that are of higher quality compared to the blurred images while satisfying rigorous FDP guarantees.



\myparagraph{Model and training} We adapt the diffusion model implementation from \url{https://github.com/VSehwag/minimal-diffusion}. The model has a U-Net architecture with a total of 32.8M parameters. We train a class-conditional model for the AFHQ dataset and an unconditional model for LSUN Bedroom. See Table \ref{tab:diffusion_hyperparameters} for details about hyperparameters.

\myparagraph{Methods} We consider the following methods/baselines.
\begin{itemize}[nosep, leftmargin=*]
    \item Non-private: This is the unmodified diffusion model training algorithm and serves as the reference point for non-private training.
    \item Public: This baseline trains on only the blurred images (i.e., public features) using the same non-private training algorithm, which is not DP but is $(\epsilon=0, \delta=0)$-FDP. We seek to outperform this baseline with FDP training.
    \item DP: This baseline performs standard DP-SGD training from scratch with Poisson subsampling and R\'{e}nyi DP accounting. By Theorem \ref{thm:FDP_composition}, this implies FDP with the same $\epsilon$ and $\delta$.
    \item FDP (DP-SGD) (\emph{ours}): We train the diffusion model using DP-SGD starting from the initialization obtained by training on blurred images. By Proposition \ref{prop:DP_to_FDP}, if the DP-SGD fine-tuning step is $(\epsilon,\delta)$-DP then the end-to-end training algorithm is $(\epsilon,\delta)$-FDP. 
    This algorithm bears similarity to the public pretraining algorithms used in the work of \citep{ghalebikesabi2023differentially, ganesh2023public} with two key differences. First, our baseline does not have an additional source of data. Rather, it uses a separation of public and private features on the training data. This separation enables us to perform two stages of training without needing any extra data.  Second, this algorithm does not satisfy differential privacy and it is only with our formalization of feature-DP that we can prove theoretical guarantees for it. 
    \item FDP (Alg. \ref{alg:feature-DP}) (\emph{ours}): We train the diffusion model using Algorithm \ref{alg:feature-DP} starting from the same initialization as in FDP (DP-SGD). We perform an additional post-processing step of rejection sampling to filter out blurry images.
\end{itemize}

For DP and feature DP, we target a privacy guarantee of $(\epsilon, \delta=1/2N)$-DP/FDP, where $N$ is the size of the training set, $\epsilon=8$ for AFHQ and $\epsilon=0.5$ for LSUN.
\begin{figure}
    \centering
    \begin{minipage}[c]{1.0\linewidth}
    \includegraphics[width=\linewidth]{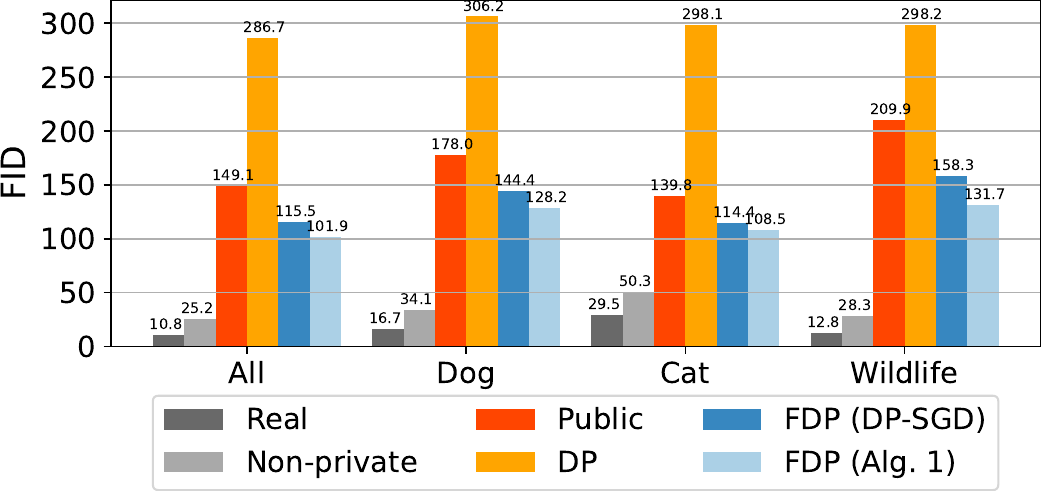}
        \caption{FID score evaluation for AFHQ; lower is better. The FID score is computed over $2,000$ random image samples vs. $1,500$ validation samples. We also show the per-class FID score for each of the dog, cat and wildlife classes.}
        \label{fig:afhq_fid}
    \end{minipage}%
\end{figure}
\begin{figure}[]
    \centering
    \begin{minipage}[c]{1.0\linewidth}
    \includegraphics[width=\linewidth]{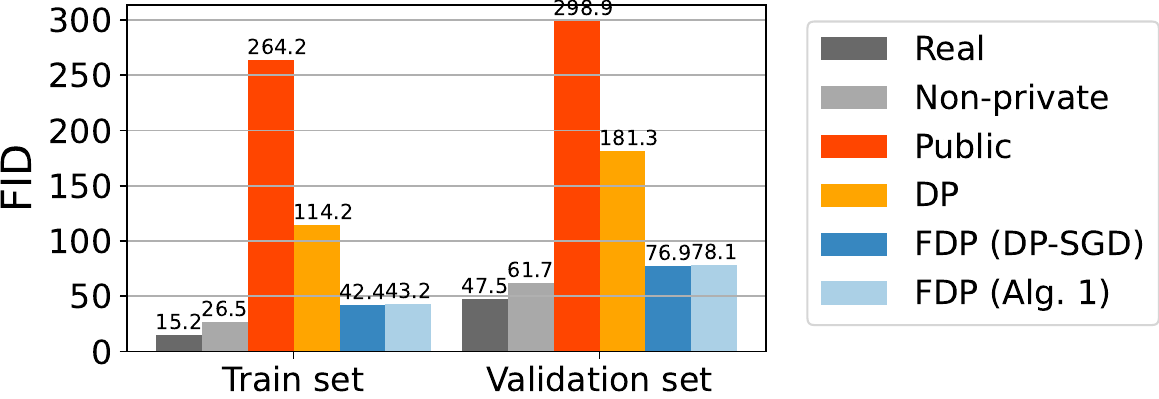}
    \caption{FID score evaluation for LSUN; lower is better. The FID score is computed over $2,000$ random image samples vs. $2,000$ train / $300$ validation samples.}
    \label{fig:lsun_fid}
    \end{minipage}
\end{figure}

\begin{table*}[t!]
    \centering
    \caption{Hyperparameters for diffusion model training.}
    \label{tab:diffusion_hyperparameters}
    \resizebox{0.9\textwidth}{!}{%
    \begin{tabular}{c|l|ccccc}
        \toprule
        Dataset & Method & \# Epochs & Batch size & Learning rate & Clipping factor $C$ & Noise multiplier $\sigma$ \\
        \midrule
        \multirow{2}{*}{AFHQ} & Non-private \& Public & 100 & 128 & $10^{-4}$ & - & - \\
                              & DP \& FDP & 100 & 128 & $10^{-4}$ & 0.1 & 0.86 \\
        \midrule
        \multirow{2}{*}{LSUN} & Non-private \& Public & 5 & 128 & $10^{-4}$ & - & - \\
                              & DP \& FDP & 500 & 16,384 & $10^{-4}$ & 0.1 & 15.6 \\
        \bottomrule
    \end{tabular}}
\end{table*}
\begin{figure*}[ht!]
    \centering
    \includegraphics[width=0.9\textwidth]{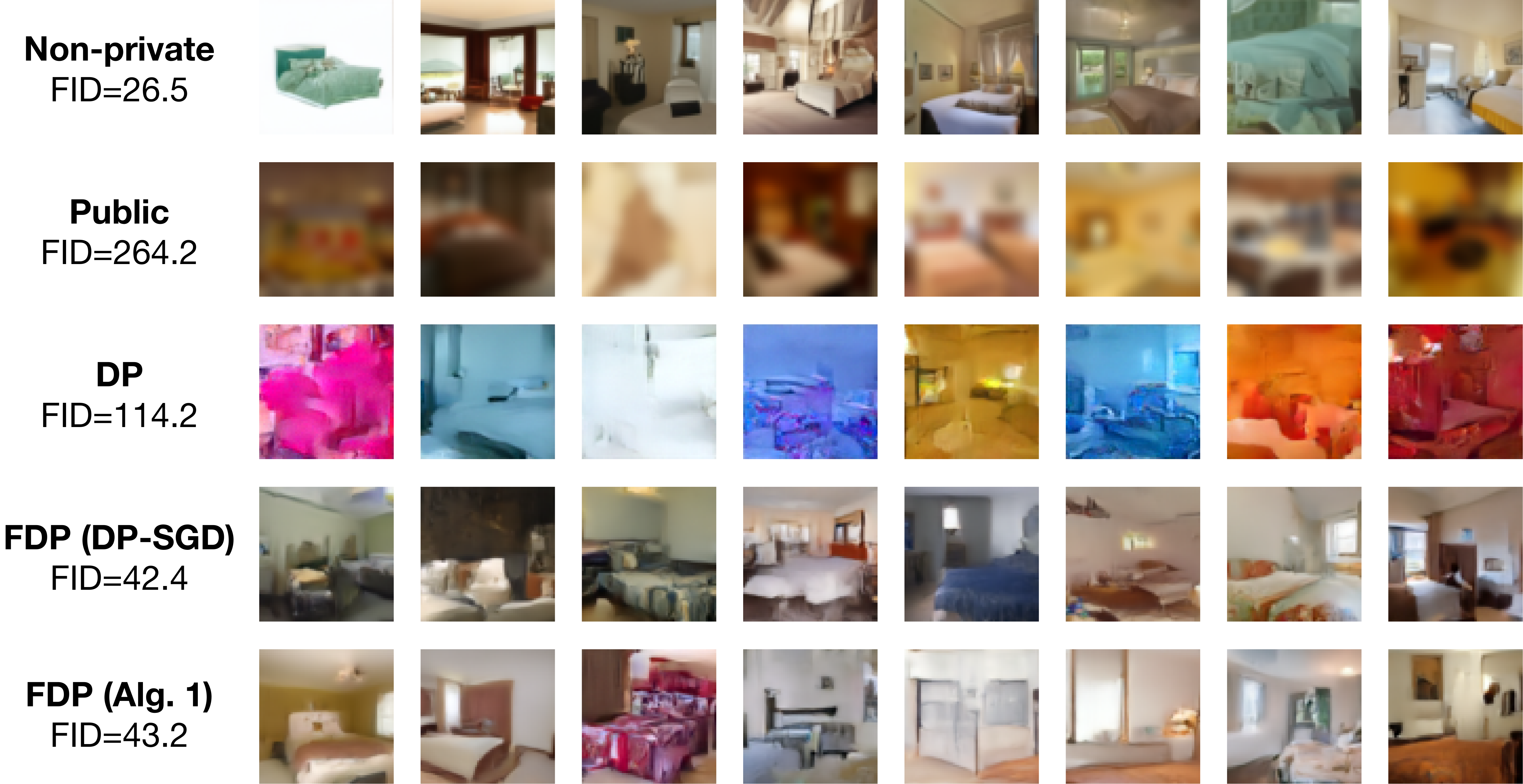}
    \caption{Generated image samples for LSUN. FDP (DP-SGD) and FDP (Alg. \ref{alg:feature-DP}) are capable of generating sharp images of bedrooms even under a strict privacy budget of $\epsilon=0.5$ due to having a large training set (3M) and batch size ($16,384$).}
    \label{fig:lsun_samples}
\end{figure*}\myparagraph{AFHQ result} Fig. \ref{fig:afhq_samples} shows samples generated from the diffusion model trained using various methods.
 The non-private model generates the highest quality animal faces that are clearly recognizable, attaining an FID score of $25.2$. As expected, the model trained on public features (blurred images) is only capable of generating blurred animal faces and attains a high FID score of $149.1$. The DP-trained model fails to generate any image that resembles an animal face, with an FID score of $286.7$ that is even higher than that of the public model. On the other hand, both FDP (DP-SGD) and FDP (Alg. \ref{alg:feature-DP}) perform surprisingly well, achieving FID scores of $115.5$ and $101.9$, respectively---both of which being below the FID score for the public model. The generated images are also clearly recognizable, although are much more pixelated compared to images generated by the non-private model.

We also present the full FID evaluation in Fig. \ref{fig:afhq_fid}. We sample $2,000$ images for each method and compute the FID score across the validation set. Since the model is class-conditional, we also compute the FID score for each class of dog, cat and wildlife independently. The quantitative result closely resembles the qualitative result in Fig. \ref{fig:afhq_samples}, with the non-private model having drastically lower FID compared to all other methods. Nevertheless, by starting from the initialization of the diffusion model trained on blurred images, FDP (DP-SGD) and FDP (Alg. \ref{alg:feature-DP}) are both capable of significantly reducing the FID score, outperforming DP training by a wide margin. Our result shows that feature DP can be a promising alternative definition that enables new classes of applications such as diffusion model training on privacy-sensitive images.\\

\myparagraph{LSUN result} We repeat the same evaluation for the diffusion models trained on LSUN bedroom. Figure \ref{fig:lsun_samples} shows the generated images samples and Fig. \ref{fig:lsun_fid} shows the FID score computed across $2,000$ random images from the training set
\footnote{We do this because the LSUN bedroom validation set is too small, so its FID score gives an inaccurate estimate of the distance between real and generated image distributions.}
, as well as the full validation set of $300$ images. Similar to the AFHQ experiment, the non-private model generates the highest quality images and attains the lowest FID score, while the public model generates blurry images with a high (train set) FID score of $264.2$. The DP-trained model under a strict privacy budget of $\epsilon=0.5$ is only capable of generating images with severe color distortion. Perhaps surprisingly, under the same privacy budget, FDP-trained models are able to generate sharp images of bedroom scenes consistently. Quantitatively, FDP (DP-SGD) and FDP (Alg. \ref{alg:feature-DP}) achieve FID scores of $42.4$ and $43.2$, respective, close to that of the non-private model's FID of $26.5$. We speculate that the relatively high performance of FDP-trained models is due to the large training set size (3M) and batch size ($16,384$),. This enabled the model to train for a large number of iterations under a small effective noise, starting from a good initialization provided by training on the blurred public images. Our observation is corroborated by prior results for discriminative~\citep{li2021large, de2022unlocking} and embedding model~\citep{yu2023vip} training, where scaling up the training set size and batch size proved to be important elements for the success of DP training.

\section{Discussion and Conclusion}
We have defined feature DP—a relaxation of DP that allows for the separation of a training sample into private and public features, providing rigorous protection for only the private ones. By leveraging this definition, we demonstrated that it is possible to achieve a substantially better privacy-utility trade-off for common ML tasks such as classification and generative modeling with diffusion models.

Our empirical results showed significant improvements in the quality of the models trained with feature-DP compared to traditional DP, which highlights the benefits of selectively applying privacy measures to sensitive components only. By distinguishing between private and public features, feature-DP not only ensures that privacy guarantees are upheld but also significantly enhances model utility, making privacy-preserving machine learning more practical for real-world applications. This is particularly useful in settings where non-sensitive data can contribute to learning without compromising the privacy of sensitive information.

We want to emphasize that, although feature-DP is presented as a \emph{relaxation} of the traditional notion of differential privacy, it can, in fact, lead to enhanced privacy outcomes. Due to the utility losses associated with strict privacy requirements, practitioners using traditional differentially private algorithms (e.g., DP-SGD) are often compelled to choose smaller noise levels or use loose parameters, ultimately resulting in weaker privacy guarantees. In contrast, our fine-grained definition of privacy enables the design of algorithms that focus solely on protecting the privacy of sensitive features, but with significantly stronger guarantees.

Moreover, our notion of feature-DP could improve privacy practices by encouraging practitioners who rely on ad-hoc and heuristic privacy protection methods to adopt differential privacy instead. By offering a more practical balance between privacy and utility, feature-DP makes it feasible to implement rigorous privacy measures where they are most needed, thereby fostering a more consistent use of formal privacy protections.


However, there are still some challenges and limitations that warrant further investigation. One limitation is the need for a robust method to accurately classify features as public or private, especially in dynamic or evolving datasets. Misclassification of sensitive features could lead to privacy breaches, whereas unnecessary privacy measures could degrade model performance. Future work could explore automated feature classification mechanisms, possibly using adversarial training or other machine learning techniques to better determine which features are critical to protect.

Another promising direction for future work is the development of specialized optimization algorithms tailored to feature-DP that fully exploit the separation between private and public features. Designing such algorithms could significantly enhance the utility of privacy-preserving machine learning models, making them more practical for deployment in industry applications. Moreover, further research into establishing theoretical bounds and guarantees for feature-DP across different learning scenarios would be invaluable. Such work could provide deeper insights into the trade-offs between privacy, utility, and computational efficiency, ultimately guiding practitioners in making more informed choices when balancing these crucial aspects.

Finally, we envision extending feature-DP to different modalities beyond image and tabular data, such as text and graph data. Exploring its applications in natural language processing, recommendation systems, and federated learning could open new avenues for improving privacy-preserving techniques across a wider range of machine learning domains. By focusing on adaptive privacy strategies, we believe that feature-DP can become an essential tool for bridging the gap between privacy and utility in machine learning.

\bibliographystyle{plain}
\bibliography{ref}


\appendices

\clearpage
\section*{Proofs}
\begin{proof}[Proof of Proposition \ref{prop:fdpi_to_FDPr}]
We prove this for $f(x)=e^\epsilon\cdot x + \delta$ and $fof(x)=e^{2\epsilon}x + \delta(1+e^\epsilon)$ (which corresponds to traditional differential privacy). The case of general $f$ follows similarly. We first prove the first side. Assume $M$ is \fDPi. Let $\cS$ and $\cS'$ be two datasets satisfying the replacement conditions of Definition \ref{def:fdp}. That is,
$$\cS\setminus \cS' = \set{x} \text{~~~and~~~} \cS'\setminus \cS = \set{x'} \text{~~~and~~~}\Psi(x)=\Psi(x')=u.$$
 By the fact that $M$ is \fDPi, we know that there is s simulator $\Sim$ such that for all $\cT$
$$ \Pr[\Sim(S\cap S', u)]\leq e^\epsilon\cdot\Pr[M(\cS')\in \cT]  + \delta $$
and 
$$\Pr[M(\cS)\in \cT]\leq e^\epsilon\cdot \Pr[\Sim(S\cap S'), u)] + \delta.$$
Combining the two inequalities we get
\begin{align*}\Pr[M(\cS)\in \cT]&\leq e^{2\epsilon}\Pr[M(\cS')\in \cT] + \delta(1+e^\epsilon)\\&= fof(\Pr[M(S')\in \cT]).
\end{align*}
This concludes the proof for first direction. For the other direction, we need to construct a simulator. Consider a simulator $\Sim$ that given $\cS, u$, first finds a point $x'$ such that $\Psi(x')=u$ and then runs the mechanism on $M(S\cup\set{x'})$. Note that there always exists an $x'$ because we know $\Psi(x)=u$. Now, let $\cS'=\cS \cup \set{x}$. We have
\begin{align*}
\Pr[\Sim(\cS,u)\in \cT] &= \Pr[M(\cS \cup \set{x'})\in \cT]\\
&\leq e^{\epsilon}\Pr[M(\cS')\in \cT] + \delta.
\end{align*}
which is followed by the fact that $M$ is \fDPr and similarly
\begin{align*}
\Pr[M(\cS')\in \cT]& \leq e^{\epsilon}\Pr[M(\cS \cup \set{x'})\in \cT]+ \delta. \\
&=e^{\epsilon}\Pr[\Sim(\cS,u)\in \cT]+ \delta.
\end{align*}\end{proof}

\begin{proof}[Proof of Proposition \ref{prop:DP_to_FDP}]
This directly follows from definition. A simulator can always ignore $u$ and run $M(S)$.
\end{proof}
\begin{proof}[Proof of Theorem \ref{thm:FDP_composition}]
    We prove this for $k=2$ and the rest follows by induction. Let $M_1\colon \cX^* \to \Theta$ and $M_2\colon \cX^*\times \Theta \to \Theta$. Note that we are allowing the mechanism $M_2$ to depend on an additional auxiliary input. This auxiliary input could be the output of $M_1$, so we can adaptively call $M_2$ followed by $M_1$. Let $\Sim_1\colon \cX^* \times \cF \to \Theta$ be the simulator for $M_1$  and $\Sim_2\colon \cX^* \times \cF \time \Theta \to \Theta$ be the simulators that satisfy feature DP for $M_2$. Again, note that $\Sim_2$ takes an additional parameter to allow for $\Sim_2$ to depend on all inputs to $M_2$. That is, $\Sim_2$ could be different for different auxiliary inputs to $M_2$. Now, we build a simulator $\Sim$ as follows: On an input dataset $\cS$ and a feature $u\in \cF$, $\Sim$ will first call $\Sim_1(\cS,u)$ to get a model $\theta$, then it calls $\Sim_2(\cS, u, \theta)$. We prove that $\Sim$ satisfies the notion of feature DP for $M_1 o M_2$.

    \begin{definition}[Hockey stick divergence and dominating pairs] The hockey stick divergence between two random variables $X$ and $Y$, (with pdfs $\mu_X$, $\mu_Y$) is defined as   
    $$HS_\alpha(X,Y) = \Ex_{\theta_\sim S}[(\frac{\mu_X(\theta)}{\mu_Y(\theta)}-\alpha)_{+}].$$ We say a pair of random variables $(X,Y)$ dominates another pair $(M,S)$ iff for all $\alpha\in \R$ 
    $$HS_\alpha(M,S))\leq HS_\alpha(X,Y).$$
    \end{definition}
    \begin{lemma}\label{lem:HS_to_fDP}
    Let $f=T(X,Y)$ be a trade-off function, for a pair of distributions $(M,S)$ we have 
    $$\forall \cT, \Pr_{\theta\sim M}[\theta \in \cT] \leq 1 - f(\Pr_{\theta\sim S}[\theta \in \cT])$$ if and only if $(M,S)$ is dominated by $(X,Y)$.
    \end{lemma}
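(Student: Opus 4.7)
My plan is to prove both directions by exploiting the classical duality between trade-off functions and hockey stick divergences. Given any pair of distributions $(P,Q)$, the family $\{HS_\alpha(P,Q)\}_{\alpha \ge 0}$ and the trade-off function $T(P,Q)$ encode the same information about the achievable $(\Pr_P[\cT], \Pr_Q[\cT])$ region (the ROC curve): one can be recovered from the other by a Legendre-type transform. The lemma follows by transporting the trade-off domination into hockey stick domination and back.

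For the direction ``(*) $\Rightarrow$ (2)'', I would fix $\alpha \ge 0$ and an arbitrary measurable $\cT$. By (*), $\Pr_M[\cT] - \alpha\Pr_S[\cT] \le (1-f(t)) - \alpha t$ where $t = \Pr_S[\cT]$. Taking the supremum over $\cT$ (hence over $t \in [0,1]$) bounds $HS_\alpha(M,S)$ by $\sup_{t \in [0,1]}\bigl((1-f(t)) - \alpha t\bigr)$. I would then identify this supremum with $HS_\alpha(X,Y)$ using the definition of $f = T(X,Y)$: since $1 - f(t) = \sup\{\Pr_Y[\cT'] : \Pr_X[\cT'] \le t\}$, and $\alpha \ge 0$ makes it optimal to saturate the constraint by choosing $t = \Pr_X[\cT']$, the outer supremum collapses to $\sup_{\cT'}(\Pr_Y[\cT'] - \alpha \Pr_X[\cT'])$, which is exactly the hockey stick of $(X,Y)$ in the convention of the lemma (after matching the null/alternative roles; here I would appeal to the closure of the ROC region under taking complements so that the relevant supremum form matches the definition in the excerpt).

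For the direction ``(2) $\Rightarrow$ (*)'', I would run the argument in reverse. For any $\cT$ and any $\alpha \ge 0$, the definition of hockey stick gives $\Pr_M[\cT] - \alpha\Pr_S[\cT] \le HS_\alpha(M,S) \le HS_\alpha(X,Y)$, so $\Pr_M[\cT] \le \alpha\Pr_S[\cT] + HS_\alpha(X,Y)$. Minimizing the right-hand side over $\alpha \ge 0$, the resulting infimum is precisely the Legendre conjugate of the hockey stick curve evaluated at $\Pr_S[\cT]$, which I would show equals $1 - T(X,Y)(\Pr_S[\cT]) = 1 - f(\Pr_S[\cT])$. Substituting yields (*).

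The main obstacle is establishing the Legendre-conjugacy identity $\inf_{\alpha \ge 0}\bigl(\alpha t + HS_\alpha(X,Y)\bigr) = 1 - f(t)$ rigorously. One direction is easy: plugging any Neyman--Pearson test $\cT'$ with $\Pr_X[\cT'] \le t$ into the definition of $HS_\alpha$ gives $HS_\alpha(X,Y) \ge \Pr_Y[\cT'] - \alpha\Pr_X[\cT'] \ge \Pr_Y[\cT'] - \alpha t$, so $\alpha t + HS_\alpha(X,Y) \ge \Pr_Y[\cT']$ for every admissible $\cT'$. The harder direction uses the Neyman--Pearson lemma: the likelihood-ratio test at threshold $\alpha$ attains the supremum in $HS_\alpha$, and by sweeping $\alpha$ one traces out the concave upper boundary of the ROC curve, so the infimum is attained exactly at the threshold whose test has type I error $t$. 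Edge cases (atoms in the likelihood ratio, $t=0$, $t=1$) are handled by randomizing the test, which preserves all probabilities and leaves the ROC region unchanged. Once this conjugacy is in hand, the two implications follow mechanically.
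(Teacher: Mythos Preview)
The paper does not actually prove this lemma. It is stated, without argument, inside the proof of the composition theorem (Theorem~\ref{thm:FDP_composition}) as a known equivalence between trade-off domination and hockey-stick domination, and then used as a black box. So there is no proof in the paper to compare yours against.

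Your plan is the standard route and is correct in outline: the Legendre-type conjugacy between the trade-off curve $t\mapsto 1-f(t)$ and the family $\alpha\mapsto HS_\alpha$ is exactly what makes the two domination notions equivalent, and the Neyman--Pearson lemma is the right tool to nail down the identity $\inf_{\alpha\ge 0}(\alpha t + HS_\alpha) = 1-f(t)$. The one genuinely delicate point is the one you already flag: your forward computation produces
\[
HS_\alpha(M,S)\ \le\ \sup_{\cT'}\bigl(\Pr_Y[\cT']-\alpha\,\Pr_X[\cT']\bigr),
\]
which in the paper's convention is $HS_\alpha(Y,X)$, not $HS_\alpha(X,Y)$. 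Taking complements of $\cT'$ does \emph{not} interchange $HS_\alpha(X,Y)$ and $HS_\alpha(Y,X)$; what resolves the role issue is the trade-off symmetry $T(Y,X)=T(X,Y)^{-1}$, so that the single inequality $(\ast)$ is equivalent to $T(S,M)\ge T(X,Y)$ and simultaneously to $T(M,S)\ge T(Y,X)$, and running your conjugacy argument in both orientations gives the hockey-stick bounds in both orders. With that adjustment the argument goes through; just be explicit about which orientation you are proving, since the lemma as written fixes one.
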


    let $M\equiv (U_1,U_2)= (M_2(\cS \cup \set{x}), M_1 o M_2(\cS \cup \set{x})$ be the joint distribution of the result of running the mechanism.  Similarly $S\equiv(S_1,S_2)\equiv \Sim_1 o \Sim_2(\cS \cup{\Psi(x)})$. We use $S_2[\theta]$ and $U_2[\theta]$ to denote the conditional random variable of $S_2\mid S_1=\theta$ and $U_2\mid U_1=\theta$. Also let $\mu=(\mu_1,\mu_2)$ be the pdf of $M$ and $\nu=(\nu_1,\nu_2)$ the pdf of $S$. We also abuse the notation and $X_1(.)$, $X_2(.)$  $Y_1(.)$ and $Y_2(.)$ and the probability density functions for $X_1, X_2, Y_1,$ and $Y_2$. We have
    

    \begin{align*}
        &HS_\alpha (M, S) = \Ex_{\theta \sim S} [(\frac{\mu(\theta)}{\nu(\theta)}-\alpha)_+]\\
        &=\Ex_{\theta_1 \sim S_1}\Big[\Ex_{\theta_2 \sim S_2[\theta_1]}[(\frac{\mu(\theta_1,\theta_2)}{\nu(\theta_1,\theta_2)}-\alpha)_+]\Big]\\
        &=\Ex_{\theta_1 \sim S_1}\Big[\frac{\mu_1(\theta_1)}{\nu_1(\theta_1)}\Ex_{\theta_2 \sim S_2[\theta_1]}[(\frac{\mu_2[\theta_1](\theta_2)}{\nu_2[\theta_1](\theta_2)}-\alpha\cdot \frac{\nu_1(\theta_1)}{\mu_1(\theta_1)})_+]\Big]\\
        &= \Ex_{\theta_1 \sim S_1}\Big[\frac{\mu_1(\theta_1)}{\nu_1(\theta_1)}HS_{\alpha\cdot \frac{\nu_1(\theta_1)}{\mu_1(\theta_1)}}(M_2[\theta],S_2[\theta_2])\Big]\\
        &\leq \Ex_{\theta_1 \sim S_1}\Big[\frac{\mu_1(\theta_1)}{\nu_1(\theta_1)}HS_{\alpha\cdot \frac{\nu_1(\theta_1)}{\mu_1(\theta_1)}}(X_2,Y_2)\Big]\\
        &=\Ex_{\theta_1 \sim S_1}\Big[\Ex_{\theta_2 \sim Y_2}[(\frac{\mu_1(\theta_1)X_2(\theta_2)}{\nu_1(\theta_1)Y_2(\theta_2)}-\alpha)_+]\Big]\\
        &=\Ex_{\theta_2 \sim Y_2}\Big[\Ex_{\theta_1 \sim S_1}[(\frac{\mu_1(\theta_1)X_2(\theta_2)}{\nu_1(\theta_1)Y_2(\theta_2)}-\alpha)_+]\Big]\\
        &=\Ex_{\theta_2 \sim Y_2}\Big[\frac{X_2(\theta_2)}{Y_2(\theta_2)}\Ex_{\theta_1 \sim S_1}[(\frac{\mu_1(\theta_1)}{\nu_1(\theta_1)}-\alpha\frac{Y_2(\theta_2)}{X_2(\theta_2)})_+]\Big]\\
        &=\Ex_{\theta_2 \sim Y_2}\Big[\frac{X_2(\theta_2)}{Y_2(\theta_2)}HS_{\alpha\frac{Y_2(\theta_2)}{X_2(\theta_2)})}(M_1, S_1)\Big]\\
        &\leq \Ex_{\theta_2 \sim Y_2}\Big[\frac{X_2(\theta_2)}{Y_2(\theta_2)}HS_{\alpha\frac{Y_2(\theta_2)}{X_2(\theta_2)})}(X_1, Y_1)\Big]\\
         &=\Ex_{\theta_2 \sim Y_2}\Big[\Ex_{\theta_1 \sim X_1}[(\frac{X_1(\theta_1)X_2(\theta_2)}{Y_1(\theta_1)Y_2(\theta_2)}-\alpha)_+]\Big]\\
         &=HS_\alpha(X_1\times X_2, Y_1\times Y_2).
    \end{align*}
    With similar steps we can show that $H_\alpha(M,S) \leq HS_\alpha(X_1\times X_2, Y_1 \times Y_2).$ Therefore using Lemma \ref{lem:HS_to_fDP} we conclude the proof.
\end{proof}

Now we show that sub-sampling does not generally improve feature-DP. We construct a mechanism $M$ that is \fDPi with respect to a feature $\Psi$, but for any $p\in[0,1]$ its sub-sampled version does not satisfy \fDPi for any $\epsilon'< \epsilon$. 
\begin{proof}[Proof of Proposition \ref{prop:sub-samp-FDP}]
Let $\cB=\set{0,1}^d$ and define a mechanism $M\colon \cB^* \to \cB^*$ as follows. Let $\cS=\set{x_1, \dots, x_n; x_i\in \cB}$, we define $M(X)$
$$M(\cS) = \set{R(x_1), \dots,R(x_n)}$$
where $R(x_i)$ is Boolean random variable that is sampled based on the randomized response mechanism as follows:
\begin{align*}
R(x_i) = \begin{cases}
x_i^1,x_i^2,\dots, x_i^d &\text{With probability $\frac{e^{2\epsilon}}{1+e^{2\epsilon}}$}\\    
1-x^1_i,x_i^2,\dots, x_i^d &\text{With probability $\frac{1}{1+e^{2\epsilon}}$}
\end{cases}
\end{align*}
Also let us define the feature $\Psi\colon \cB \to \cB$ be a feature defined as
$$\Psi(b_1,\dots,b_d)= (b_2,\dots,b_d).$$ 

We first show that $M'=M~o~\poisson_p$ is \fDPi[$\epsilon$]. We construct a simulator $\Sim:\cB^*\times \cB \to \cB^*$ as follows:
\begin{align*}
\Sim\Big(\set{x_1,\dots, x_n}, \Psi(x_{n+1})\Big) &=M'(\set{x_1,\dots, x_n})\\&~~~\cup G_p(\Psi(x_{n+1})).
\end{align*}
where 
\begin{align*}
G_p(b_2,\dots,b_d)=
\begin{cases}
    \emptyset& \text{With probability $1-p\frac{e^\epsilon}{1+e^{2\epsilon}}$},\\
    \set{(0,b_2,\dots,b_d)}& \text{With probability $\frac{p}{2}$},\\
    \set{(1,b_2,\dots,b_d)}& \text{With probability $\frac{p}{2}$}.
\end{cases}
\end{align*}
In what follows we use $G$ to denote $G_{1.0}$.
Now we use $\Sim$ as a simulator for $M' = M o~\poisson_p$. 
Consider two datasets $\cS=\set{x_1,\dots,x_n}$ and $\cS'=\cS \cup \set{x_{n+1}}$.

Let $\cT^*\subset \cB$ We have
\begin{align*}\Pr[M'(\cS') =\cT] &= (1-p)\cdot \Pr[M'(\cS) =\cT]\\ 
&~~~+ p \cdot \Pr[M(\cS) \cup \set{R(x_{n+1})} = \cT].
\end{align*}
On the other hand, we have
\begin{align*}\Pr[\Sim(\cS, \Psi(x_{n+1})) =\cT] &= (1-p)\cdot \Pr[M'(\cS) =\cT]\\ 
&+ p \cdot \Pr[M(\cS) \cup \set{G(\Psi(x_{n+1}))} = \cT].
\end{align*}
We now only focus on the second terms. Let $b$ be the first bit of $x_{n+1}$. We have
\begin{align*}q_1&=\Pr[M(\cS) \cup \set{R(x_{n+1})} = \cT]\\  &=
(1-\frac{1}{2e^\epsilon}) \Pr[M(\cS) \cup \set{(b,\Psi(x_{n+1)})} = \cT]\\
&+\frac{1}{2e^\epsilon} \Pr[M(\cS) \cup \set{(1-b,\Psi(x_{n+1)})} = \cT].
\end{align*}
We also have
\begin{align*}q_2&=\Pr[M(\cS) \cup \set{G(\Psi(x_{n+1}))} = \cT]\\  &=
\frac{1}{2} \Pr[M(\cS) \cup \set{(b,\Psi(x_{n+1}))} = \cT]\\
&+\frac{1}{2} \Pr[M(\cS) \cup \set{(1-b,\Psi(x_{n+1}))} = \cT].
\end{align*}
Let $q_3 = (1-\frac{1}{2e^\epsilon}) \Pr[M(\cS) \cup \set{(b,\Psi(x_{n+1}))} = \cT]$, $q_4=\frac{1}{2e^\epsilon} \Pr[M(\cS) \cup \set{(1-b,\Psi(x_{n+1}))} = \cT]$, $q_5 = \frac{1}{2} \Pr[M(\cS) \cup \set{(b,\Psi(x_{n+1}))} = \cT]$, $q_6=\frac{1}{2} \Pr[M(\cS) \cup \set{(1-b,\Psi(x_{n+1}))} = \cT]$ so that we have $q_1$=$q_3 + q_4$ and $q_2=q_5+q_6$. We can now check that 

$$e^{-\epsilon}\leq \frac{q_3}{q_5}\leq e^\epsilon\text{~~and~~}e^{-\epsilon}\leq \frac{q_4}{q_6} \leq e^\epsilon.$$
Therefore we have $$e^{-\epsilon}\leq\frac{(1-p)\Pr[M'(\cS) =\cT] + p\cdot q_1}{(1-p)\Pr[M'(\cS) =\cT] + p\cdot q_2}\leq e^\epsilon.$$

Now we prove the tightness. We need to show that for any simulator $\Sim$ there is a pair of datasets and a possible outcome where the ratio of probabilities is at least $e^\epsilon$. Let $\cS=\set{\mathbf{0}}$ and $\cS'=\set{\mathbf{a},\mathbf{b}}$ and $\cS''=\set{\mathbf{a},\mathbf{c}}$ where 
$$\mathbf{a}=(0,\dots,0), \mathbf{b}=(1,\dots,1) \text{~and~} \mathbf{c}=(0,1,\dots,1) .$$
Now let $\Sim$ be an arbitrary simulator that satisfies \fDPi{$(\epsilon',0)$} for $\epsilon'<\epsilon$. Let $\cT_1$ be the set of all possible outcomes with $\mathbf{b}\in \cT_1$.

We have, 
\begin{align*}
q_1=\Pr[M'(\cS')\in \cT_1] = p\cdot (1-\frac{1}{2e^\epsilon}).
\end{align*}
and
\begin{align*}
q_2=\Pr[M'(\cS'')\in \cT_1] = \frac{p}{2e^\epsilon}.
\end{align*}
Now let us define
\begin{align*}
q_3=\Pr[\Sim(\cS, \Psi(\mathbf{b})) \in \cT_1] = \Pr[\Sim(\cS, \Psi(\mathbf{c}))\in \cT_1].
\end{align*}
\begin{align*}
q_4=\Pr[M'(\cS')\not \in  \cT_2 \cup \cT_1] = 1-p.
\end{align*}
\end{proof}

\begin{proof}[Proof of Theorem \ref{thm:att_inf}]
We want to bound the following quantity.
    $$adv(A, X, u, S, d, \rho)= \Pr_{\substack{x\sim [X\mid \Psi(X)=u]\\ \theta \sim M(S \cup \set{x})\\
    x' \sim A(\theta)}}[d(x', x)\leq \rho].$$
For simplicity, let us assume $A$ is deterministic. Note that this assumption is without loss of generality because we can always fix the randomness of $A$ to be the best possible randomness for the objective, and by averaging argument the advantage of such deterministic attack is at least as good as the randomized attack. 
We rewrite this expectation as follows: 

\begin{align*}
    &\Pr_{\substack{x\sim [X\mid \Psi(X)=u]\\ \theta \sim M(S \cup \set{x})\\
    x' \sim A(\theta)}}[d(x', x)\leq \rho]\\
    &~~~= \E_{x\sim [X\mid \Psi(X)=u]}[\Pr_{\theta \sim M(S \cup \set{x})}[(A(\theta),x)\leq \rho]].
\end{align*}
Let us define an event $E$ on the model parameters where $E_x=\set{\theta\in \Theta; d(A(\theta),x)\leq \rho}$.
Therefore we have,
\begin{align*}
    &\Pr_{\substack{x\sim [X\mid \Psi(X)=u]\\ \theta \sim M(S \cup \set{x})\\
    x' \sim A(\theta)}}[d(x', x)\leq \rho]\\
    &~~~~= \E_{x\sim [X\mid \Psi(X)=u]}[M(S\cup \set{x})\in E_x ].
\end{align*}
let $\Sim$ be the simulator for feature $\Psi$ for which the feature DP is satisfied. By the definition of feature DP we have, 

$$\Pr[M(S\cup \set{x}) \in E_x] \leq 1-f\big(\Pr[\Sim(S, u)\in E_x]) \leq \rho]\big) .$$
Therefore using Jensen inequality and the definition of $\ball(\cdot)$ we have,
\begin{align*}&\E_{x\sim [X\mid \Psi(X)=u]}[M(S\cup \set{x}) \in E_x]\\
&~~~\leq 1- \E_{x\sim [X\mid \Psi(X)=u]}[f(\Pr[\Sim(\cS, u)\in E_x])]
\\
&~~~\leq 1- f(\E_{x\sim [X\mid \Psi(X)=u]}[\Pr[\Sim(\cS, u)\in E_x]])
\end{align*}
On the other hand we have
\begin{align*}&\E_{x\sim [X\mid \Psi(X)=u]}[\Pr[\Sim(\cS, u)\in E_x]]\\
&~~~= \E_{x\sim [X\mid \Psi(X)=u]}[\E_{\theta\sim A(\Sim(\cS, u))}[d(A(\theta),x)\leq \rho]]\\
&~~~=\E_{\theta\sim A(\Sim(\cS, u))}[\E_{x\sim [X\mid \Psi(X)=u]}[d(A(\theta),x)\leq \rho]]\\
&~~~\leq \E_{\theta\sim A(\Sim(\cS, u))}[\ball(X,u,d,\rho)]].\\
&~~~=\ball(X,u,d,\rho)
\end{align*}
\end{proof}

\begin{proof}[Proof of Theorem \ref{thm:utility}]First note that the gradient  $g_t$ is an unbiased estimate of the empirical gradient by linearity of expectation. We can also bound the expected $\ell_2$ norm of the gradient by $\tau'^2$ and the variance of the added noise by $d\sigma^2$. Therefore, applying the standard stochastic gradient oracle techniques (Theorems 2 and 3 in \citep{shamir2013stochastic}  for analyzing the convergence of SGD, we can bound the gap between the empirical risk of the optimal and the obtained model to be at most
\[(\frac{M^2}{c} + c\tau'^2 + cd\sigma^2)\frac{2+\log(T)}{\sqrt{T}}.\] in the case of convex loss functions (See Theorem 2 in \citep{shamir2013stochastic}). Similarly, for the $\lambda$- strongly convex losses, by using Theorem 3 in \citep{shamir2013stochastic} we can bound the excess empirical risk by

$$17(\tau'^2 + d\sigma^2)\frac{1+\log(T)}{\lambda T}.$$
\end{proof}

\begin{proof}[Proof of Theorem \ref{thm:privacy}]
Here we first analyze one step of the mechanism. Let us start with a lemma about feature-DP. For warm-up we first do the privacy analysis for the case of privacy with replacement, then we show how the privacy with insertion deletion follows similarly by defining a proper simulator. 
\begin{proposition}\label{propone}
Let $\Psi$ be a feature and $M_\lambda$ be a parameterized mechanism that is $f$ feature DP with respect $\Psi$, (\fDPr[f]), for all parameters $\lambda$. Also, let $M'_\lambda$ be an arbitrary parameterized mechanism that operates on $\Psi(\cS)$. Then, adaptive composition of $M$ and $M'$ in both orders ($MoM'$ and $M'oM$) is also $f$ feature DP (\fDPr[f]). Note that adaptive composition means that the parameter of $M$ (or $M'$) could be a function of the output of $M'$ (or $M$). 
\end{proposition}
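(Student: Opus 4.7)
My plan is to handle each ordering by computing the hockey-stick divergence of the joint output directly, and then invoke Lemma \ref{lem:HS_to_fDP} to translate the divergence bound back into an $f$-feature-DP statement. The two structural observations I will rely on are: (i) if $\cS \approx^r \cS'$ with differing points $x, x'$ satisfying $\Psi(x) = \Psi(x')$, then $\Psi(\cS) = \Psi(\cS')$ as multisets, so any mechanism that touches the dataset only through $\Psi(\cS)$ induces the \emph{identical} output distribution on $\cS$ and $\cS'$; and (ii) the feature-DP hypothesis on $M_\lambda$ says exactly that the canonical pair $(X,Y)$ with $f = T(X,Y)$ dominates $(M_\lambda(\cS), M_\lambda(\cS'))$ for every parameter $\lambda$ and every feature-DP-neighboring pair.

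First I would treat the ordering in which $M$ runs first and $M'_\theta$ is applied second with parameter $\theta$ taken from the output of $M$. I would write the joint density of $(\theta,\theta')$ as the product of the density of $\theta \sim M(\cS)$ and the conditional density of $\theta' \sim M'_\theta(\Psi(\cS))$, and analogously under $\cS'$. By observation (i) the second factor is identical in the two joints, so it pulls out of the positive-part integrand and integrates to one in $\theta'$; what remains is $HS_\alpha(M(\cS), M(\cS'))$, which is at most $HS_\alpha(X,Y)$ by observation (ii).

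For the reverse ordering, where $M'_\lambda$ runs first on $\Psi(\cS)$ and then $M_{\theta'}$ runs with parameter taken from its output, I would condition on the $M'$-output $\theta'$. The marginal density of $\theta'$ is the same under $\cS$ and $\cS'$ by observation (i), so I can pull it outside the hockey-stick integrand and apply the pointwise feature-DP bound $HS_\alpha(M_{\theta'}(\cS), M_{\theta'}(\cS')) \leq HS_\alpha(X,Y)$ for every fixed $\theta'$. Integrating against the common density of $\theta'$ is then just linearity of the integral and yields the same bound $HS_\alpha(X,Y)$. A final application of Lemma \ref{lem:HS_to_fDP} in both cases converts the divergence bounds into the desired $f$-feature-DP statements.

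The main subtlety I expect is purely book-keeping of the adaptive dependency: making sure the conditioning and the factorization of the joint density are applied to the correct variable in each ordering, and that the $M'$-factor is recognized as identical on the two neighbors before the positive part is taken. Once that is in place, no composition-style degradation of the trade-off function occurs, because in each ordering one of the two mechanisms contributes either a distribution that is literally the same on the two neighbors or one whose divergence is already fully accounted for by feature-DP, and these two structures combine without any product-measure or union-bound loss.
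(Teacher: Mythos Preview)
Your proposal is correct and rests on the same two structural observations as the paper's proof: that $\Psi(\cS)=\Psi(\cS')$ for feature-neighboring datasets, so the $M'$-factor is literally identical on the two inputs, and that $M_\lambda$ is $f$-feature-DP uniformly in $\lambda$. Where you differ is in the packaging: the paper does not pass through hockey-stick divergences or Lemma~\ref{lem:HS_to_fDP} at all, but works directly with the trade-off inequality $\Pr[M_\lambda(\cS)\in \cT]\le 1-f(\Pr[M_\lambda(\cS')\in \cT])$, takes expectation over $\lambda$, and then invokes convexity of $f$ (Jensen) to push the expectation inside $f$. Your route avoids the Jensen step because in the hockey-stick formulation the common $M'$-factor pulls out linearly from the positive part, so the reduction is a pure integral identity rather than a convexity argument. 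Both reach the same undegraded bound $f$; your version has the small additional benefit of being explicitly phrased for the joint output of the two stages, whereas the paper's calculation is written for events on the final output only.
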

\begin{proof} Let us use $\Lambda(\theta)$ to denote the parameter of $\theta$ that is passed to the parameterized mechanisms $M$ and $M'$. We have
\begin{align*}
\Pr[MoM'(\cS) \in S] &= \E_{\lambda \sim \Lambda(M'(\Psi(\cS))}\Pr[M_\lambda(\cS) \in S]\\&= \E_{\lambda \sim \Lambda(M'(\Psi(\cS'))}\Pr[M_\lambda(\cS) \in S]\\ &\leq 
\E_{\lambda \sim \Lambda(M'(\Psi(\cS'))}  [1-f(\Pr[M_\lambda(\cS') \in S)]\\
&\leq 1- f(\E_{\lambda \sim \Lambda(M'(\Psi(\cS'))}  [\Pr[M_\lambda(\cS') \in S])\\
&=1- f(  [\Pr[MoM'(\cS') \in S])
\end{align*}
Note that the last two lines follow by convexity of $f$.
Similarly, for the other direction we have,
\begin{align*}
\Pr[M'oM(\cS) \in S] &= \E_{\lambda \sim \Lambda(M(\cS))}\Pr[M'_\lambda(\Psi(\cS)) \in S]\\&= \E_{\lambda \sim \Lambda(M(\cS))}\Pr[M'_\lambda(\Psi(\cS')) \in S]\\ &\leq 1- f(\E_{\lambda \sim \Lambda(M(\cS'))}  \Pr[M'_\lambda(\Psi(\cS')) \in S])\\
&= 1-f(\Pr[M'oM(\cS') \in S]).
\end{align*}
\end{proof}
Note that in the Proposition above, it is crucial that $M$ and $M'$ do not share internal randomness. This is the main reason we cannot use the same batch for the public and private loss functions.

Now note that each iteration of the algorithm can be stated as $MoM'$ where $M'(\Psi(\cS))$ is the process of calculating $g_t^{pub}$, and $M(\cS)$ is a sub-sampled Gaussian mechanism that calculates $\frac{1}{m}\sum_{x\in D} \nabla \ell_{priv}(\mathbf{w}_t, x) + \mathcal{N}(g_t^{pub},\sigma^2).$
Therefore, each step of the algorithm is as private as of a sub-sampled Gaussian mechanism with sub-sampling rate $q=m/n$ and noise multiplier $\sigma/\tau$. So, using Propositions \ref{propone} and \ref{prop:DP_to_FDP} also applying the composition theorem \ref{thm:FDP_composition} we finish the proof. Additionally, by converting $f$-DP to DP we get that the mechanism is $(\epsilon,\delta)$-DP  for $\epsilon=c\frac{m\tau}{\sigma\cdot n} \sqrt{\log(\frac{1}{\delta}) \log(\frac{T}{\delta})}$ and  some constant $c$.

For the case of insertion/deletion, we need to design a simulator. Consider a simulator $sim$ that is defined by Algorithm \ref{alg:feature-DP-sim}. This algorithm takes the same inputs as of algorithm \ref{alg:feature-DP}, but it also takes an additional public feature value $u$. We have highlighted the only difference between the simulator and Algorithm $\ref{alg:feature-DP}$ in color red.

Now let $X_i$ be a random variable that corresponds to $g_i$ in algorithm \ref{alg:feature-DP-sim} and $Y_i$ be the the random variable that corresponds to $g_i$ in Algorithm \ref{alg:feature-DP}. By construction of the simulator, we know that $X_i$ and $Y_i$ are both mixture of Gaussian distributions $\cN(A_i, \sigma^2)$, $\cN(B_i, \sigma^2)$ where $B_i$ itself is a mixture of $A_i$ and $C_i=A_i+ \nabla \ell_{priv}(\mathbf{w}_t; x)$, with $B_i=(1-p)\cdot A_i + p \cdot C_i$. Therefore, the hockey-stick divergence between $X_i$ and $Y_i$ is bounded by that of $\cN(0, \sigma^2)$ and $(1-p)\cN(0, \sigma^2) + p\cN(0,\sigma^2)$ (See Theorem 4 in~\citep{balle2018privacy}). Therefore, for each step of the mechanism we have $T(X_i, Y_i)$ is dominated by $T\Big(\cN(0,\sigma^2), (1-p)\cN(0,\sigma^2) + p\cN(L, \sigma^2)\Big)$ and therefore by composition of trade-off functions (See Theorem 3.2 in \citep{dong2019gaussian}) for the entire mechanism we have 
$$T\Big((X_1,\dots, X_T),(Y_1,\dots, Y_T)\Big)$$ is dominated by $$f=T\Big(\cN(0,\sigma^2)^T , \big((1-p)\cN(0,\sigma^2) + p\cN(L, \sigma^2)\big)^T\Big).$$

This implies that for any event $\cT$, setting $X=(X_1,\dots, X_n)$ and $Y=(Y_1,\dots, Y_n)$ we have
$$\Pr[X \in \cT] \leq 1-f(\Pr[Y\in \cT]).$$
And this shows that the mechanism satisfies $f$-DP with respect to feature $\Psi$ for the function $f$ specified in Theorem \ref{thm:privacy}.

\begin{algorithm}
\caption{Simulator for Noisy SGD with Public Features}\label{alg:feature-DP-sim}
\begin{algorithmic}[1]
\REQUIRE Public feature $\Psi$, Dataset $\cS$, Batch sizes $m$, $m'$, Learning rate schedule $\eta(t)$, standard deviation $\sigma$, Projection space $\mathcal{W} \in R^d$, Loss functions $\ell_{priv}, \ell_{pub}$, Number of iterations $T$, {\color{red} public feature value $u$}
\STATE Initialize $\mathbf{w}_1 \in \mathcal{W}$
\FOR{$t = 1, \dots, T$}
    \STATE Sample a mini-batch $B^{priv}_t$ with Poisson sampling with probability $p=m/|\cS|$. 
    \STATE $g^{priv}_{t}~=~\frac{1}{m}\sum_{x\in B_t^{priv}}\nabla \ell_{priv}(\mathbf{w}_t; x)$
    \STATE {\color{red} Construct a set $\cS''$ = $\cS \cup \set{x'}$ with $\Psi(x') = u$  }
    \STATE{\color{red} Sample a second mini-batch $B^{pub}_t$ of size $m'$ uniformly at random from $\cS''$.}
    
    \STATE $g_t^{pub}=~\frac{1}{m'}\sum_{x\in B_t^{pub}}\nabla \ell_{pub}(\mathbf{w}_t; \Psi(x)).$
    \STATE Let $g_t~=~g_t^{pub} + g_t^{priv} + \mathcal{N}(0, \sigma^2)$
    \STATE Update $\mathbf{w}_{t+1} = \mathbf{w}_t - \eta(t)\cdot g_t$
    \STATE Project $\mathbf{w}_{t+1}$ into the set $\mathcal{W}$: $\mathbf{w}_{t+1} = \text{Proj}_{\mathcal{W}}(\mathbf{w}_{t+1})$
\ENDFOR
\STATE \textbf{return} $\mathsf{aggregate}(\mathbf{w}_1,\dots,\mathbf{w}_{T+1})$
\end{algorithmic}
\end{algorithm}
\end{proof}

\newpage
\section{Meta-Review}

The following meta-review was prepared by the program committee for the 2025
IEEE Symposium on Security and Privacy (S\&P) as part of the review process as
detailed in the call for papers.

\subsection{Summary}
The paper describes feature differential privacy (FDP), a relaxation variant of DP to only protect certain features in individual data records.  FDP utilizes a secondary loss function, together with two separate random batch selection, to achieve formal privacy guarantee, especially against attribute inference attacks.

\subsection{Scientific Contributions}
\begin{itemize}
\item Addresses a Long-Known Issue
\item Provides a Valuable Step Forward in an Established Field
\end{itemize}

\subsection{Reasons for Acceptance}
\begin{enumerate}
\item The notion of “Feature Differential Privacy” is novel, which protects against attribute inference attacks.
    \item This paper extends DP-SGD algorithm to satisfy feature DP, and shows that feature DP can achieve better utility guarantees than approximate DP.
    \item This paper allows one to selectively decide which features to protect.

\end{enumerate}

\subsection{Noteworthy Concerns} 
\begin{enumerate} 
\item Accuracy gains on the Criteo dataset is relatively small.
\end{enumerate}

\end{document}